\DeclareMathOperator{\diag}{diag}
\newcommand{\algname}{Zoltan-AlgD\xspace}
\newcommand{\real}{\mathbb{R}}
\newcommand{\patohqcolor}{aquamarine\xspace}
\newcommand{\patohdcolor}{blue\xspace}
\newcommand{\hmetiscolor}{orange\xspace}
\newcommand{\zoltancolor}{brown\xspace}
\newtheorem{theorem}{Theorem}
\author{
  Ruslan Shaydulin \\ 
  School of Computing\\ 
  Clemson University \\
    rshaydu@clemson.edu
  \and
  Jie Chen \\ 
  IBM Thomas J. Watson\\ Research Center \\
  chenjie@us.ibm.com
  \and
  Ilya Safro \\ 
  School of Computing \\ 
  Clemson University\\
  isafro@clemson.edu
}
\title{Relaxation-Based Coarsening for Multilevel Hypergraph Partitioning}
\begin{document}

\maketitle

\begin{abstract}
Multilevel partitioning methods that are inspired by principles of multiscaling are the most powerful practical hypergraph partitioning solvers. 
Hypergraph partitioning has many applications in disciplines ranging from scientific computing to data science. In this paper we introduce the concept of algebraic distance on hypergraphs and demonstrate its use as an algorithmic component in the coarsening stage of multilevel hypergraph partitioning solvers. The algebraic distance is a vertex distance measure that extends hyperedge weights for capturing the local connectivity of vertices which is critical for hypergraph coarsening schemes.  
The practical effectiveness of the proposed measure and corresponding coarsening scheme is demonstrated through extensive computational experiments on a diverse set of problems. Finally, we propose a benchmark of hypergraph partitioning problems to compare the quality of other solvers.
\end{abstract}

\section{Introduction}

Hypergraphs are generalizations of graphs. Both graphs and hypergraphs are ordered pairs of sets $(V, E)$, where $V$ is the vertex set and $E$ is the set of (hyper)edges such that each $e\in E$ is a subset of vertices. The difference is that in a graph, the cardinality of each edge is exactly two, whereas in a hypergraph, a hyperedge can contain an arbitrary number of vertices. 
The hypergraph partitioning (HGP) problem is therefore a generalization of the graph partitioning (GP) problem. In GP the goal is to split the set of vertices into multiple sets (usually called partitions) of similar sizes, such that a cut metric is minimized. Here, a cut defines a set of (hyper)edges spanning more than one partition. In the HGP generalization, the hyperedges can possibly span more than two partitions. There exist several versions of minimization objectives, constraints, and cut metrics in both GP and HGP~\cite{bichot2013graph,bulucc2016recent}.
Hypergraph partitioning has many applications, including VLSI design~\cite{karypis1999multilevel, alpert1995recent}, parallel matrix multiplication~\cite{catalyurek1999hypergraph}, classification~\cite{zhou2006learning}, cluster ensembling~\cite{strehl2002cluster}, and combinatorial scientific computing \cite{Naumann:2012:CSC:2141107}, among others~\cite{bichot2013graph,papa2007hypergraph}.

\subsection{Multilevel partitioning}

One of the most popular and fastest approaches for solving HGP problems is multilevel: first, the hypergraph is iteratively coarsened by  merging its vertices; then, the solution (i.e., the partitioning) is computed at the coarsest level; and finally, the coarsest solution is gradually uncoarsened back to the finest level (see \cref{fig:mainscheme}). Coarsening consists of two steps, namely, aggregation and contraction. During the aggregation step, the decision is made on which vertices to merge. At the contraction step, these vertices are merged and hyperedges are dropped. The reverse process, uncoarsening, also consists of two steps, namely, interpolation and refinement. In the interpolation step, the solution for coarse vertices (i.e., the assignment of vertices into corresponding partitions) is interpolated into the next-fine level. The refinement step improves the interpolated solution, typically, by iteratively solving a sequence of local optimization problems that refine the objective or constraints of HGP. In many state-of-the-art solvers, such refinement is implemented using vertex-moving heuristics that accommodate a vertex in a better partition.

The multilevel method captures the global structure of a hypergraph in the solution by combining local information at different scales of coarseness. The aims of coarsening are (a) preserving the structural properties of the original hypergraph throughout the entire multilevel hierarchy until the smallest hypergraph is obtained, and (b) allowing an effective interpolation in which the solution of coarse level variables is correctly projected into the corresponding fine level ones. Typically, a high-quality coarsening should also preserve the spectral properties of the (hyper)graph adjacency matrix (even though it is not directly required by a coarsening algorithm).  The coarsest hypergraph is partitioned, and the solution is projected on larger, finer-level hypergraphs all the way back to the original one. At each level the projected solution is refined.

\begin{figure}[h]
  \includegraphics[width=\textwidth]{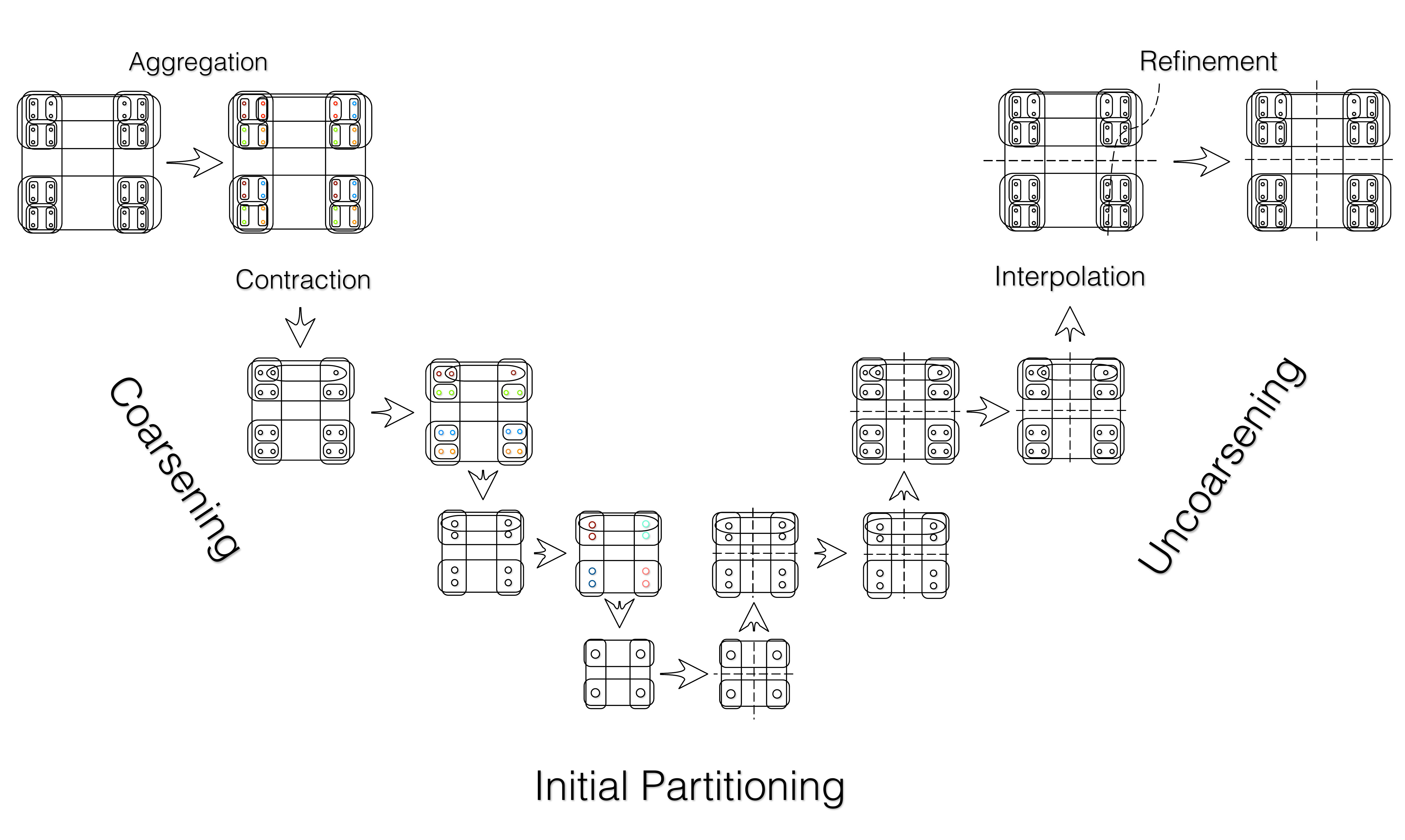}
  \caption{Outline of multilevel hypergraph partitioning scheme.  On the left side of the V-cycle (named after the shape of the coarsening-uncoarsening hierarchy), the hypergraph is iteratively coarsened by aggregating (horizontal arrows) and contracting (vertical arrows) vertices. On the right side of the V-cycle, the initial partitioning is repeatedly interpolated (vertical arrows) into the next-fine level and refined (horizontal arrows).}\label{fig:mainscheme}
\end{figure}

As a result, the multilevel method ends up with a solution that is aware of the global structure and is locally optimal as well. Note, however, that the global optimality of the final solution heavily depends on the quality of coarsening. Therefore, it is paramount to have a good distance/similarity measure on vertices, one that captures not only how vertices are similar \emph{locally}, but also how close they are \emph{globally}. In other words, a sophisticated similarity measure for vertices (and a resulting more accurate aggregation) allows for a more accurate interpolation of the solution in the next step.

In most cases, multilevel algorithms for HGP are generalizations of those for GP and other cut optimization problems; they are, in turn, simplified adaptations of multigrid and other multiscale optimization strategies \cite{brandt:optstrat}. There exist a few coarsening methods for different cut problems on graphs; e.g., various versions of weighted aggregations inspired by algebraic multigrid for graph reordering \cite{Safro2006,SafroT11,RonSB11,Hu:wavefront}, and matching-based methods for GP \cite{karypis1999multilevel,Sanders2011}, vertex separator problem \cite{safro:mlvsp}, and clustering \cite{DhillonGK07weighted}. For GP, advanced similarity measures used in coarsening include flows \cite{delling2011graph}, diffusion \cite{lafon2006diffusion}, and algebraic distances \cite{amg-sss12}. For an in-depth overview, the reader is referred to a survey by Bulu\c{c} et al.~\cite{bulucc2016recent}. However, to the best of our knowledge there exist few advanced similarity concepts for coarsening applied to HGP.

\subsection{Our contribution}

We introduce a similarity measure for vertices of hypergraphs, as well as a relaxation-based coarsening scheme for multilevel hypergraph partitioning.
The measure is called \emph{algebraic distance for hypergraphs}. It generalizes the algebraic distance developed for  graphs \cite{chen2011algebraic,RonSB11}, used in applications such as graph partitioning \cite{amg-sss12}, sparsification \cite{john2016single}, and other problems \cite{SafroT11,safro-fastresp}. The algebraic distance takes into account distant neighborhoods of vertices and addresses the issue of unweighted and noisy hyperedges.  At each level of coarsening, we compute algebraic distances on a bipartite representation  of the hypergraph and modify the hyperedge weights
that are used for agglomerative inner-product matching~\cite{catalyurek2009repartitioning}.
This way, we penalize cutting hyperedges that contain similar vertices measured by the algebraic distances between them.  
Our implementation is available at \url{https://github.com/rsln-s/algebraic-distance-on-hypergraphs}. We also propose a comprehensive benchmark of hypergraphs and the partitioning results for a broad scientific community to evaluate the quality of their solvers. Our benchmark is available at  \url{http://shaydul.in/hypergraph-partitioning-archive/}. 

\section{Preliminaries}

A hypergraph is an ordered pair of sets $(V,E)$, where $V$ is the set of vertices and $E$ is the set of hyperedges. Each hyperedge is a subset of $V$. Another terminology is also used in the literature, wherein a hypergraph is defined as a triplet $(V,H,P)$, where $V$ is the set of vertices, $H$ is the set of hyperedge labels and $P$ is the set of pins. Each pin connects a vertex from $V$ with a label from $H$. A net (a synonym of hyperedge) is therefore a set of pins connecting the vertices to the corresponding label~\cite{bichot2013graph}. In this paper, we will use only the former terminology. We will also refer to a hyperedge as simply an ``edge'' when it does not cause confusion.
Both vertices and hyperedges can have weights, namely, $w(v)\in\mathbb R_{\geq 0}$ and $w(e)\in\mathbb R_{\geq 0}$ respectively for each $v\in V, e \in E$. A zero weight for a hyperedge assumes that it is not present in the hypergraph.

\subsection{Hypergraph partitioning}

Hypergraph partitioning is a generalization of graph partitioning. The aim is to partition the set of vertices into a number of disjoint sets such that a cut metric is minimized subject to some imbalance constraint. Partitioning into two parts is often referred to as \emph{bipartitioning}, and partitioning into $k$ parts, \emph{$k$-way partitioning}.

More formally, if $V$ is the set of vertices and $E$ is the set of edges, a \emph{partitioning} is a set of $k$ mutually disjoint subsets $V_i \subseteq V$, $1\leq i \leq k$, such that $V_1\cup V_2\cup .. \cup V_k = V$. Then, the cut is defined as $E_{cut}\subset E$ such that for $e\in E$ holds $e\in E_{cut}$ if and only if $\exists i, j\in \mathbb{N}$, $i\neq j$: $e\cap V_i \neq \emptyset, e\cap V_j\neq\emptyset$.

Several cut metrics have been introduced over the years~\cite{Karypis97multilevelhypergraph,alpert1995recent,catalyurek1999hypergraph} for HGP. Two of the most commonly used are \emph{hyperedge cut} and \emph{connectivity}.
The hyperedge cut cost is equal to the total weights of edges that span more than one partition, i.e.,
\[
\sum_{e\in E_{cut}}w(e).
\]
The connectivity objective is defined as
\[
\sum_{e\in E_{cut}}(\lambda(e)-1),
\]
where $\lambda(e)$ is the number of partitions spanned by the hyperedge $e$~\cite{devine2006parallel}. In this paper we aim at minimizing the hyperedge cut metric.

The imbalance constraint specifies to what extent the partitions can differ in their size. In this paper we will use the imbalance constraint defined in the state-of-the-art hypergraph partitioner Zoltan~\cite{devine2006parallel}. Imbalance is defined as the sum of vertex weights in the maximum partition over the average sum of vertex weights in a partition. For $k$-way partitioning:

 \[
 imbal=\dfrac{\sum_{v\in V_{max}}w(v)}{\frac{1}{k}\sum_{v\in V}w(v)},
 \]
 where $V_{max}$ is the largest partition (i.e. $ \sum_{v\in V_{max}}w(v) = \max_i\left(\sum_{v\in V_{i}}w(v)\right)$).

\subsection{Multilevel method}
\label{sec:prereq_multilevel}

The multilevel method for graph partitioning was initially introduced to speed up existing algorithms~\cite{barnard1994fast}, but was quickly recognized as a good way to improve the quality of partitioning~\cite{karypis1998fast,hendrickson1995multi}. It is used as a global suboptimal heuristic framework~\cite{bulucc2016recent}, in which other heuristics are incorporated at different stages. These three stages are coarsening, initial partitioning, and uncoarsening.

During the coarsening stage a hypergraph $H = (V,E)$ is approximated via a series of successively smaller hypergraphs $H^i = (V^i, E^i)$, $1\leq i \leq l$, where $l$ is the number of  levels in the hierarchy. The superscript denotes the number of the corresponding level for hypergraphs, nodes, and edges, respectively.
Each next-coarse hypergraph is constructed by merging or aggregating vertices in the previous one according to some heuristic: $v^i_k = \{v^{i-1}_{k_1}, ..., v^{i-1}_{k_j}\}$. That is, a vertex $v^i_k$ in the coarse hypergraph $G^i$ at the $i$th level is created by grouping a set of vertices $v^{i-1}_{k_1}, ..., v^{i-1}_{k_j}$ from the finer hypergraph $H^{i-1}$. Vertices can be grouped by using different criteria, with the aim of interpolating solution from coarse level nodes to the corresponding fine level nodes with minimum loss of solution quality. In the case of pairwise grouping of vertices, such a coarsening is referred to as matching. The weight of the new coarse vertex is equal to the sum of weights of the merged vertices: 
\[
w(v^i_k) = w(v^{i-1}_{k_1}) + ... + w(v^{i-1}_{k_j}).
\]
A coarse vertex is contained in all hyperedges that contain the merged vertices. Hyperedges of cardinality one are discarded. Coarsening terminates when the size of the hypergraph is below a certain threshold or when a solution is easy to compute. 

In the partitioning stage, the coarsest hypergraph $H^l$ is partitioned using exact or approximate solver. In many existing implementations, the solver is a local search heuristic. 
This partitioning is anticipated to approximate the \emph{global} solution in the sense that it incorporates the global structure of the hypergraph. In some cases, when $H^l$ is sufficiently small, an exact solution can be computed.

The uncoarsening stage consists of two steps, namely, interpolation and refinement. During uncoarsening, the partitioning from the coarse hypergraph $H^{i+1}$ is projected onto the fine $H^{i}$ (interpolation) and refined using a local search heuristic such as Kernighan-Lin (KL)~\cite{kernighan1970efficient} or Fiduccia-Mattheyses (FM)~\cite{fiduccia1988linear} (refinement). This retains the global information of the partitioning of the coarse hypergraph while optimizing it locally. Typically, solving a local search subproblem only improves the global solution at the same level.

\section{Related work}

Because HGP is NP-complete~\cite{gary1979computers},  many heuristics and approximations have been developed. The most common practical approach to HGP is the multilevel framework. This section  begins with a brief description of non-multilevel techniques, followed by multilevel ones.

\subsection{Spectral methods}

An important family of non-multilevel techniques is spectral methods. It is necessary to point out that while they can be used as standalone methods, they are also often used within the multilevel framework. As we discussed in  \cref{sec:prereq_multilevel}, the multilevel method for combinatorial optimization problems is a heuristic that incorporates other heuristics as well, such as different similarity concepts and iterative refinement techniques. Spectral hypergraph partitioning generalizes spectral graph partitioning methods to hypergraphs. It usually utilizes the spectral properties of the adjacency matrix. Two main approaches are identified.

The first one is to construct a graph from the hypergraph~\cite{hagen1992new} and then apply spectral graph partitioning methods that are more well-developed~\cite{ng2001spectral,fiedler1973algebraic}. Two of the most common approaches are star and clique expansions. In the case of clique expansion, a hyperedge is replaced by a set of edges that form a complete subgraph for the vertices in the hyperedge. In the case of star expansion, a hyperedge is replaced by a new vertex, which is connected by new edges to all vertices previously contained in the hyperedge.

This approach suffers from an obvious loss of information: when a hyperedge is expanded (i.e., replaced by a clique or a star), its vertices are connected by a number of edges. The information that they are equal members of a hyperedge is lost.
Ihler et al. show~\cite{ihler1993modeling} that even for bi-partitioning there exists no min-cut graph model of a hypergraph. That is, one cannot create a graph whose edge cut is equal to the hyperedge cut in the original hypergraph, if negative weights are not allowed~\cite{ihler1993modeling}. Finally, the hypergraph-to-clique conversion greatly increases the size of the problem. Nevertheless, we point out that despite these limitations, good practical results can still be obtained by using graph models of a hypergraph. 

The second approach is to build hypergraph Laplacian and to study its properties, bypassing the graph representation. This can be done in various ways. Bolla defines an unweighted hypergraph Laplacian matrix and shows the link between its spectral properties and the hypergraph cut~\cite{bolla1993spectra}. Zhou et al. define a Laplacian matrix and show a way to use it for $k$-way partitioning~\cite{zhou2006learning}. Hu et al. argue that Laplacian tensors naturally extend the graph Laplacian matrices to hypergraphs. They describe a Laplacian tensor for an even uniform hypergraph and define algebraic connectivity for it~\cite{hu2012algebraic}. Chan et al. define a Laplacian operator induced by a stochastic diffusion process on the hypergraph  and generalize Cheeger's inequality for it~\cite{chan2016spectral}. However, these recent advances of the spectral approaches, while promising, are not yet well developed for large-scale instances.

\subsection{Multilevel methods}

Most state-of-the-art hypergraph partitioners (such as hMetis2 \cite{karypis1999multilevel}, PaToH \cite{ccatalyurek2011patoh}, Zoltan \cite{devine2006parallel} and Mondriaan \cite{vastenhouw2005two}, to name a few) use a multilevel approach inspired by simplified multigrid and principles of multiscale computing. 

In the coarsening stage of the V-cycle, most hypergraph partitioners use, with some variations, a heuristic that greedily aggregates neighboring vertices, with some preference based on a similarity metric. These similarity metrics are local and usually very simple. The metric used by Mondriaan~\cite{vastenhouw2005two}, hMetis2~\cite{karypis1999multilevel} and Zoltan~\cite{devine2006parallel} 
is \emph{inner product matching}. The inner product of two vertices is defined as the Euclidean inner product of their hyperedge incidence vectors~\cite{devine2006parallel}. In other words, the inner product of two vertices is the number of hyperedges they have in common or, in the weighted case, total weight of those hyperedges. PaToH uses a different metric as a default option, called \textit{absorption matching}:
\[
\text{am}(u,v) = \sum_{e=(u,v)\in E}\frac{1}{|e|-1},
\]
where $|e|$ is the cardinality of an edge $e$. However, there are scenarios where these simple solutions, while computationally efficient and easy to implement, are not very effective and can be improved~\cite{chen2011algebraic}. This is the major reason for us to revise the coarsening strategy for hypergraphs. In the refinement step, all of the aforementioned partitioners use some variation of Fiduccia-Mattheyses~\cite{fiduccia1988linear} or Kernighan-Lin~\cite{kernighan1970efficient} including their advanced efficient implementation introduced in \cite{akhremtsev2017engineering}.

However, there is relatively little research on how to improve the coarsening of a hypergraph. Existing research was motivated mainly by the intuition that a decision made earlier in the coarsening stage has substantial influence on the quality of the final cut: any error or wrong decision would be propagated all the way down the V-cycle and accumulate. Some methods that improve the coarsening include a use of rough set theory~\cite{lotfifar2015multi}, as well as some variations on the greedy heavy matching scheme. A very promising but unfinished attempt to generalize HGP coarsening using algebraic multigrid was published in Sandia Labs Summer Reports \cite{buluc-boman}. Another extension of a multilevel method, namely, the n-level recursive bisection, was introduced in \cite{schlag2016k}. For a more extensive review of hypergraph partitioning algorithms the reader is referred to~\cite{trifunovic2006parallel}.

\section{Algebraic Distance on Hypergraphs}

In this section, we introduce a new distance measure that plays a crucial role in improving the quality of coarsening. To demonstrate its effectiveness, we use Zoltan~\cite{devine2006parallel} as the baseline solver. 

The outline of our approach is as follows. At each level of coarsening, we compute new weights for the hyperedges. These weights are passed to Zoltan's coarsening subroutine, allowing it to use this additional information for making matching decisions. After the matching is computed, the weights are set back to the original hyperedge weights and the multilevel algorithm continues. 
In other words, we leverage the hyperedge weights to pass information on the structure of the hypergraph, derived by our algorithm, to the HGP solver's coarsening scheme. We refer to these weights as \emph{algebraic weights}.
The outline is shown in \cref{fig:flow_chart}.

\begin{figure}
  \centering
  \includegraphics[width=\textwidth]{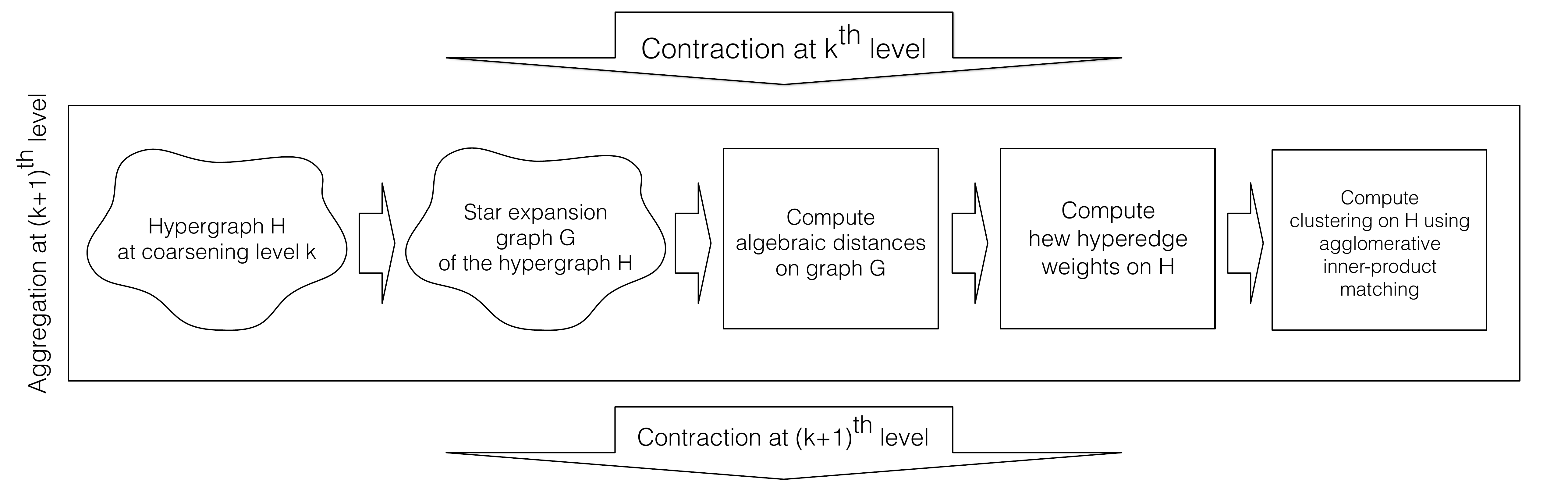}
  \caption{Outline of the algorithm used to aggregate vertices at $k$th coarsening level.}
  \label{fig:flow_chart}
\end{figure}

Discussion of this algorithm would not be complete without a brief description of the coarsening scheme used by Zoltan. Zoltan uses an agglomerative matching technique known as inner-product matching~\cite{catalyurek2009repartitioning}, or called heavy-connectivity clustering in PaToH~\cite{catalyurek1999hypergraph}. In this technique, the vertices are visited in random order. If the visited vertex $v$ is unmatched, an adjacent vertex $u$ with the highest \emph{connectivity} is selected and the current vertex is added to its cluster $C_u$. The \emph{connectivity} is defined as $N_{v,C_u}/W_{v,C_u}$, where $N_{v,C_u}$ is the total weight of the edges connecting $v$ with the vertices in the cluster $C_u$, and $W_{v,C_u}$ is the total weight of the vertices in the candidate cluster~\cite{catalyurek1999hypergraph}.

\subsection{Algorithm}
\label{sec:alg_dist_algorithm}

Recall that the hypergraph is denoted as $H = (V,E)$. We call the star expansion graph $G$. In general, we use primed variables when referring to the elements of $G$ and non-primed for the elements of $H$. At each coarsening level the following happens:

\begin{enumerate}[leftmargin=*]

\item Build the star expansion graph $G = (V',E')$ of  $H$~\cite{sun2008hypergraph}. Each hyperedge is replaced by a new vertex, which is connected to every vertex contained in the original hyperedge, i.e., $V' = V\cup E$ and  $E' = \{(v,h) \mid v\in h, h \in E\}$~\cite{sun2008hypergraph}. This way, each hyperedge is replaced by a ``star'' in the bipartite graph $G$, with all vertices $v\in V$ in one part and  hyperedges $h\in E$ in another (hence the new set of vertices is $V' = V\cup E$).
The weights of the vertices stay the same. The vertices introduced to replace hyperedges are assigned the weights of the corresponding hyperedges they represent, divided by the number of vertices in that hyperedge; i.e.,  
\[ 
w'(v') = w(v), \quad \forall v\in V\subset V', \text{ and}
\]
\begin{equation} \label{eq:2}
w'(h') = \frac{w(h)}{|h|}, \quad \forall h\in E \subset V'.
\end{equation}

The edges $e'\in E'$ of the star expansion $G$ are left unweighted (we can assume they all have weight $1$). We refer to the vector of weights in the graph $G$ as $\textbf{w}'\in\mathbb R^{|V'|} $ and that in the hypergraph $H$ as $\textbf{w}\in\mathbb R^{|V|+|E|}$, to distinguish the two cases. Note that since $V' = V\cup E$, these vectors are of the same size. 

\item Compute algebraic coordinates of the vertices in star expansion graph $G$ by using Jacobi over-relaxation (JOR, a stationary iterative relaxation), beginning with random initialization. The iterative process is repeated for several random initial vectors.

For each random vector, all coordinates are initialized with random values from the uniform distribution on $(-0.5, 0.5)$. We denote the algebraic coordinates as a 2-dimensional array $X$, where $x[r][v']$ is the algebraic coordinate for the vertex $v'\in V' = V\cup E$ and the $r$th random initial vector. Then, for a certain number of iterations (denoted in pseudocode as $num\_iter$), JOR is performed on all $v'\in G$ as follows. The vertices are visited in random order. After each iteration, the algebraic coordinates are rescaled such that the smallest algebraic coordinate is equal to $-0.5$ and the largest to $0.5$. The iteration scheme can be written as follows
\begin{equation} \label{eq:1}
\begin{split}
x^*[r][v'] = \frac{\mathlarger{\mathlarger{\sum}}\limits_{\forall u'\in V':(u',v')\in E'}w'(u')\cdot  x^{i-1}[r][u']}{\mathlarger{\mathlarger{\sum}}\limits_{\forall u'\in V':(u',v')\in E'}w'(u')} \\
x^{**}[r][v']=\omega \cdot  x^*[r][v'] + (1-\omega)x^{i-1}[r][u'],
\end{split}
\end{equation}
where $\omega$ is a relaxation factor. It is used in the same way as in Successive overrelaxation~\cite{briggs2000multigrid}, to make the convergence more stable. Weighting, rescaling and the denominator in \cref{eq:1} are introduced in part to prevent algebraic coordinates from converging to machine precision, i.e., becoming so close that they are no longer distinguishable. In our experiments, nondistinguishable coordinates tend to occur on hypergraphs whose hyperedges have high cardinality, since vertices contained in these hyperedges very quickly ``pull'' each other together.

Denote the smallest (``leftmost'' on the real line) algebraic coordinate before rescaling as $l$ and the largest (``rightmost'') as $r$. Then, the rescaling is done as follows:
\[
x^{i}[r][v']=\frac{x^{**}[r][v']-l}{r-l}-0.5.
\]

\item Define the algebraic weight of a hyperedge $h\in E$ as one over the algebraic distance between two farthest apart vertices in $h$, maximized over all random vectors: 
\[
alg\_weight(h) = \frac{1}{\max\limits_{\forall r} \max\limits_{u,v\in h} |x[r][v]-x[r][u]| }.
\]

Compute the final weights $\tilde{\textbf{w}}$ to be passed to Zoltan from the original hyperedge weights $\textbf{w}$ as follows
\[
\tilde{w}(h) = w(h)\cdot\dfrac{alg\_weight(h)}{\mathlarger{\sum}\limits_{h^*\in E}alg\_weight(h^*) / |E|}.
\]

Hyperedge weights are multiplied by the ratio between the computed algebraic distance for this hyperedge and the average algebraic weight for all hyperedges. Note that here $w(h)$ is the weight of the original hyperedge, before the star expansion.

\item Pass these new weights $\tilde{\textbf{w}}$ to Zoltan's~\cite{devine2006parallel} agglomerative inner product matching. After the matching, all weights are reset back to the ones before the star expansion (i.e., back to $\textbf{w}$) and the multilevel process continues.

\end{enumerate}

The pseudocode for computing the algebraic weights (steps 1--3) is presented in \cref{alg:weights}.

\begin{algorithm2e}
    \SetKwInOut{Input}{Input}
    \SetKwInOut{Output}{Output}

    \Input{Relaxation factor $\omega$, number of random vectors $R$ , number of iterations $num\_iter$}
    \Output{algebraic weights $alg\_weight$}
    \For{$r=1:\text{R}$}{
    	Randomly initialize $x[r]$\;
        \For{$k=1:num\_iter$}{
        	\tcp{Perform iteration sweep over all vertices}
        	\For{$v'\in V'$}{
				$x^*[r][v'] = \frac{\mathlarger{\mathlarger{\sum}}\limits_{\forall u'\in V':(u',v')\in E'}w'(u')\cdot  x^{i-1}[r][u']}{\mathlarger{\mathlarger{\sum}}\limits_{\forall u'\in V':(u',v')\in E'}w'(u')}$\;
				$x^{**}[r][v']=\omega \cdot  x^*[r][v'] + (1-\omega)x^{i-1}[r][u']$\;
            }
            \For{$v\in V'$}{
            	\tcp{Rescale}
                $l=\min\limits_{u\in V'}{x[r][u]}$\;
                $r = \max\limits_{u\in V'}{x[r][u]}$\;
            	$x^{i}[r][v']=\frac{x^{**}[r][v']-l}{r-l}-0.5$\;
            }
        }
    }
    \For{h in E}{
    	\tcp{Compute algebraic weights for hyperedges}
    	$alg\_weight(h) =1\, /\, \max\limits_{\forall r} \max\limits_{u,v\in h} |x[r][v]-x[r][u]| $\;
    }
    \caption{Computing algebraic weights}
    \label{alg:weights}
\end{algorithm2e}

\subsection{Convergence analysis}
\label{sec:alg_dist_analysis}

\cref{alg:weights} is an iterative process
that computes $x[r][v']$ for all vertices $v'\in V'$ and random initial vector numbers $r$. To analyze the convergence of this process and understand the properties of the computed algebraic weights, we need additional notation. Let $x^{(i)}\in\real^{|V'|}$ denote the $i$th iterate of the vector $x[r]$ for the $r$th random initial vector. Since the same iterative process is performed for all vectors of random initial coordinates, we will perform the analysis on only one (arbitrary) vector and hence omit the dependence on $r$.

Let $A\in\real^{|E|\times|V|}$ be the incidence matrix of the hypergraph; that is, $A_{ij}=1$ if vertex $j$ belongs to the hyperedge $i$ and $0$ otherwise. Let $S^v\in\real^{|V|\times|V|}$ and $S^h\in\real^{|E|\times|E|}$ be diagonal matrices such that
\[
S_{jj}^v=w(v_j) \quad\text{and}\quad S_{ii}^h=\frac{w(h_i)}{|h_i|},
\]
where $|h_i|$ denotes the cardinality of the $i$th hyperedge (same as in \cref{eq:2}). Define the following matrix
\[
W=
\begin{bmatrix}
0 & A^TS^{h} \\
AS^{v} & 0
\end{bmatrix} 
\]
and let $D$ be the diagonal matrix with elements $D_{jj}=\sum_iW_{ij}$. For convenience of analysis, we decompose $D$ as
\[
D=
\begin{bmatrix}
D^{v} & 0 \\
0 & D^{h}
\end{bmatrix}
\]
with $D^{v}\in\real^{|V|\times|V|}$ and $D^{h}\in\real^{|E|\times|E|}$, such that the block sizes of $D$ are compatible with those of $W$. Note that $W$ is asymmetric.

It is not hard to see that \cref{alg:weights} computes the following update:
\[
x^{(i)}=\frac{1}{r-l}\Big[\underbrace{\omega D^{-1}Wx^{(i-1)}+(1-\omega)x^{(i-1)}}_{{x^{*}}^{(i-1)}}\Big]
-\frac{r+l}{2(r-l)}\bm{1}
\]
where $\bm{1}$ is the vector of all ones, and $r$ and $l$ are the maximum and the minimum of the elements in ${x^{*}}^{(i-1)}$, respectively (see \cref{alg:weights}). Then, we simplify the update formula as
\begin{equation}\label{eq:xk}
x^{(i)}=\alpha^{(i-1)}Hx^{(i-1)}+\beta^{(i-1)}\bm{1},
\end{equation}
where
\[
H=\omega D^{-1}W+(1-\omega)I,\quad
\alpha^{(i-1)}=\frac{1}{r-l},\quad\text{and}\quad
\beta^{(i-1)}=-\frac{r+l}{2(r-l)}.
\]

The convergence of the iteration depends on the properties of the iteration matrix $H$. We first note that the matrix $D^{-1}W$ is diagonalizable with real eigenvalues.

\begin{theorem}\label{thm:dw_diagonalizability}
Let $(D^{h})^{-1/2}(S^{h})^{1/2}A(D^{v})^{-1/2}(S^{v})^{1/2}$ admit the following singular value decomposition:
\[
(D^{h})^{-1/2}(S^{h})^{1/2}A(D^{v})^{-1/2}(S^{v})^{1/2} = \sum_{i=1}^r \sigma_iu_iz_i^T.
\]
Then, $D^{-1}W$ has a zero eigenvalue of multiplicity $|V|+|E|-2r=|V'|-2r$. Moreover, for the nonzero eigenvalues, 
\begin{equation}\label{eqn:DW.eig1}
D^{-1}W\begin{bmatrix} (S^{v})^{-1/2}(D^{v})^{-1/2}z_i \\ (S^{h})^{-1/2}(D^{h})^{-1/2}u_i \end{bmatrix}=
\sigma_i\begin{bmatrix} (S^{v})^{-1/2}(D^{v})^{-1/2}z_i \\ (S^{h})^{-1/2}(D^{h})^{-1/2}u_i \end{bmatrix}
\end{equation}
and
\begin{equation}\label{eqn:DW.eig2}
D^{-1}W\begin{bmatrix} -(S^{v})^{-1/2}(D^{v})^{-1/2}z_i \\ (S^{h})^{-1/2}(D^{h})^{-1/2}u_i \end{bmatrix}=
-\sigma_i\begin{bmatrix} -(S^{v})^{-1/2}(D^{v})^{-1/2}z_i \\ (S^{h})^{-1/2}(D^{h})^{-1/2}u_i \end{bmatrix},
\quad i=1,\ldots r.
\end{equation}
\end{theorem}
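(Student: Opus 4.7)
The plan is to reduce $D^{-1}W$ to a real symmetric matrix via an explicit similarity transformation, then read off its spectrum from the SVD of $M:=(D^{h})^{-1/2}(S^{h})^{1/2}A(D^{v})^{-1/2}(S^{v})^{1/2}$. Since $W$ has the anti-block-diagonal form with blocks $A^{T}S^{h}$ and $AS^{v}$, and $D$ is block diagonal with blocks $D^{v}$ and $D^{h}$, the product $D^{-1}W$ is also anti-block-diagonal, with blocks $(D^{v})^{-1}A^{T}S^{h}$ (upper right) and $(D^{h})^{-1}AS^{v}$ (lower left). Matrices of this shape are classically conjugate to a matrix of the form $\left[\begin{smallmatrix}0 & M^{T}\\ M & 0\end{smallmatrix}\right]$ once the two off-diagonal blocks are ``balanced,'' so the first step is to hunt for the right diagonal balancer.

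I would set
\[
T=\begin{bmatrix}(S^{v})^{1/2}(D^{v})^{1/2} & 0 \\ 0 & (S^{h})^{1/2}(D^{h})^{1/2}\end{bmatrix},
\]
which is well defined (and invertible) under the mild assumption that all vertex and hyperedge weights are strictly positive, so that $S^{v}$, $S^{h}$, $D^{v}$, $D^{h}$ are positive diagonal. A routine block multiplication then shows
\[
T\,D^{-1}W\,T^{-1}=\begin{bmatrix}0 & M^{T} \\ M & 0\end{bmatrix},
\]
with $M$ exactly the matrix whose SVD is given in the statement. The key cancellations are $(D^{v})^{1/2}(D^{v})^{-1}=(D^{v})^{-1/2}$ on the upper-right block and symmetrically for the lower-left block, together with $S^{h}(S^{h})^{-1/2}=(S^{h})^{1/2}$ and similarly for $S^{v}$; these are elementary but I would write them out once to be certain no factor is misplaced.

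Next I would use the SVD $M=\sum_{i=1}^{r}\sigma_{i}u_{i}z_{i}^{T}$ to diagonalize the symmetrized matrix. For each pair $(u_{i},z_{i})$ with $Mz_{i}=\sigma_{i}u_{i}$ and $M^{T}u_{i}=\sigma_{i}z_{i}$, direct computation gives
\[
\begin{bmatrix}0 & M^{T}\\ M & 0\end{bmatrix}\begin{bmatrix}\pm z_{i}\\ u_{i}\end{bmatrix}=\pm\sigma_{i}\begin{bmatrix}\pm z_{i}\\ u_{i}\end{bmatrix},
\]
yielding $2r$ nonzero eigenvalues $\pm\sigma_{1},\ldots,\pm\sigma_{r}$; the orthogonal complement of the column/row spaces of $M$ contributes a zero eigenspace of dimension $|V|+|E|-2r=|V'|-2r$. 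Pulling these eigenvectors back through $T^{-1}$ multiplies the top block by $(S^{v})^{-1/2}(D^{v})^{-1/2}$ and the bottom block by $(S^{h})^{-1/2}(D^{h})^{-1/2}$, giving exactly the eigenvectors displayed in equations (\ref{eqn:DW.eig1}) and (\ref{eqn:DW.eig2}). Real eigenvalues and diagonalizability follow from the symmetry of the conjugated matrix.

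The only genuine obstacle I anticipate is bookkeeping rather than conceptual: one has to verify that $T$ is well defined (strict positivity of all weights, which is natural in this setting) and that the block algebra yields precisely $M$ and $M^{T}$ in the claimed positions without stray factors. Everything else — the $\pm\sigma_{i}$ pairing, the eigenvector formulas, and the dimension count for the kernel — follows mechanically from the SVD once the symmetric form is in hand.
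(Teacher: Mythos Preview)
Your argument is correct. The similarity transformation $T\,D^{-1}W\,T^{-1}=\left[\begin{smallmatrix}0 & M^{T}\\ M & 0\end{smallmatrix}\right]$ checks out (diagonal matrices commute, so the factors collapse exactly as you claim), and the rest is the standard spectral description of a symmetric anti-block-diagonal matrix via the SVD of its off-diagonal block.

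Your route differs somewhat from the paper's. The paper does not introduce the balancing matrix $T$ explicitly; instead it verifies the two eigenvector identities by direct block multiplication, and then argues the converse: given any eigenpair $(\lambda,[c;d])$ of $D^{-1}W$ with $\lambda\neq0$, the two block equations force $|\lambda|$ to be a singular value of $M$, so every nonzero eigenvalue is some $\pm\sigma_i$ and the remaining $|V'|-2r$ eigenvalues are zero. Your approach buys you something the paper's direct computation leaves implicit: because $D^{-1}W$ is similar to a real symmetric matrix, diagonalizability and reality of the spectrum are immediate, and the kernel dimension is automatically the geometric (not merely algebraic) multiplicity of zero. The paper's approach, on the other hand, avoids introducing $T$ and is marginally quicker if one only wants to confirm the displayed eigenpairs and count nonzero eigenvalues. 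Both are short and elementary; yours is the more structural of the two.
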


\begin{proof}
The identities \cref{eqn:DW.eig1} and \cref{eqn:DW.eig2} are straightforward to verify based on the singular value decomposition. On the other hand, if $\lambda$ is an eigenvalue of $D^{-1}W$ with corresponding eigenvector $\left[\begin{smallmatrix} c \\ d \end{smallmatrix}\right]$, then,
\[
(D^{v})^{-1}A^TS^{h}d=\lambda c \quad\text{and}\quad (D^{h})^{-1}AS^{v}c=\lambda d,
\]
which implies that
\[
[(D^{h})^{-1/2}(S^{h})^{1/2}A(D^{v})^{-1/2}(S^{v})^{1/2}]^T[(D^{h})^{1/2}(S^{h})^{1/2}d]=\lambda[(D^{v})^{1/2}(S^{v})^{1/2}c],
\]
and
\[
[(D^{h})^{-1/2}(S^{h})^{1/2}A(D^{v})^{-1/2}(S^{v})^{1/2}][(D^{v})^{1/2}(S^{v})^{1/2}c]=\lambda[(D^{h})^{1/2}(S^{h})^{1/2}d].
\]
In form, these two equalities define all nonzero singular values $|\lambda|$\linebreak of $(D^{h})^{-1/2}(S^{h})^{1/2}A(D^{v})^{-1/2}(S^{v})^{1/2}$. Thus, if $\lambda$ is not equal to any of the $\pm\sigma_i$'s, then $\lambda$ must be zero.
\end{proof}

Second, note that the diagonal of the matrix $D^{-1}W$ is zero and each row is nonnegative, summing to one. Hence, by Gershgorin's circle theorem, the eigenvalues of $D^{-1}W$ must lie between $-1$ and $1$. Indeed, the two ends of this interval are attainable.

\begin{theorem}
The spectral radius of $D^{-1}W$ is $1$. In particular, there exists a pair of eigenvalues $\pm1$, and $\bm{1}$ is an eigenvector associated with an eigenvalue $+1$.
\end{theorem}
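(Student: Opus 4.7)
The plan is to establish the three assertions in the order they are stated, leaning on (i) the row structure of $D^{-1}W$ for the explicit $+1$ eigenpair, (ii) Gershgorin's theorem for the upper bound on $|\lambda|$, and (iii) the $\pm$ pairing of nonzero eigenvalues proved in Theorem \ref{thm:dw_diagonalizability} (or an equally short direct check) for the $-1$ eigenpair.

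First I would unpack the definition of $D$. The entries of the diagonal block $D^{v}$ are the row sums of $A^{T}S^{h}$ and the entries of $D^{h}$ are the row sums of $AS^{v}$, so by construction each row of $D^{-1}W$ is nonnegative and sums to one. This is exactly what the paragraph preceding the theorem records. A one-line computation then gives $D^{-1}W\,\bm{1}=D^{-1}(W\bm{1})=D^{-1}(D\bm{1})=\bm{1}$, which establishes that $+1$ is an eigenvalue with eigenvector $\bm{1}$.

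Next I would invoke Gershgorin's circle theorem: since the diagonal of $D^{-1}W$ vanishes and every row has $\ell_{1}$-norm equal to $1$, each Gershgorin disc is centered at $0$ with radius $1$, so $|\lambda|\le 1$ for every eigenvalue $\lambda$. Combined with the previous step, this pins the spectral radius at exactly $1$. This is where I would be most careful to spell out that the bound is attained rather than merely an upper estimate.

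Finally, for the $-1$ eigenvalue the cleanest route is to apply Theorem \ref{thm:dw_diagonalizability}. That theorem shows that the nonzero spectrum of $D^{-1}W$ consists of the pairs $\pm\sigma_{i}$, where $\sigma_{i}$ are the singular values of $M:=(D^{h})^{-1/2}(S^{h})^{1/2}A(D^{v})^{-1/2}(S^{v})^{1/2}$. Since $+1$ is an eigenvalue, it must coincide with some $\sigma_{i}$; the corresponding $-\sigma_{i}=-1$ is therefore also in the spectrum, with the eigenvector obtained by flipping the sign of the first block in \eqref{eqn:DW.eig1}. Alternatively, and perhaps more transparently, I would verify directly that $\begin{bmatrix}\bm{1}_{V}\\-\bm{1}_{E}\end{bmatrix}$ is an eigenvector at $-1$: the block identities $(D^{v})^{-1}A^{T}S^{h}\bm{1}_{E}=\bm{1}_{V}$ and $(D^{h})^{-1}AS^{v}\bm{1}_{V}=\bm{1}_{E}$ follow immediately from the definitions of $D^{v}$ and $D^{h}$, and substituting them into the block form of $W$ yields the desired $-1$ scaling. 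The only mild obstacle is making sure the bookkeeping between the $|V|$-block and the $|E|$-block of $D$ and $W$ is consistent; once that is laid out, each of the three assertions is a short calculation.
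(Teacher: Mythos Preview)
Your proposal is correct and follows essentially the same route as the paper's proof: verify $D^{-1}W\bm{1}=\bm{1}$ from the row-stochastic structure, use Gershgorin to cap the spectrum in $[-1,1]$, and appeal to Theorem~\ref{thm:dw_diagonalizability} for the $\pm$ pairing to obtain the $-1$ eigenvalue. The explicit eigenvector $\left[\begin{smallmatrix}\bm{1}_{V}\\-\bm{1}_{E}\end{smallmatrix}\right]$ you offer as an alternative is a nice concrete addition but not needed, since the paper simply cites the pairing from Theorem~\ref{thm:dw_diagonalizability}.
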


\begin{proof}
By the definition of $W$ and $D$, we have $D^{-1}W\bm{1}=\bm{1}$. Therefore, $D^{-1}W$ has an eigenvalue $+1$ with eigenvector $\bm{1}$. From \cref{thm:dw_diagonalizability} we know that $D^{-1}W$ must also have an eigenvalue $-1$. Then, because all the eigenvalues of $D^{-1}W$ lie between $-1$ and $+1$ by the Gershgorin's circle theorem, its spectral radius is $1$.
\end{proof}

Third, recall that the star expansion graph is a bipartite graph, with one part containing the vertices of the original hypergraph and the other part the hyperedges. Rather than defining vertex weights as in \cref{eq:2} and leaving the edges unweighted, alternatively, one may view $W$ as the weighted adjacency matrix of a graph with exactly the same vertex-edge structure, but the edges are weighted according to $W$ and the vertices are left unweighted. Then, the block and the sparsity structure of $W$ indicates that the directed edges of this bipartite graph (in the alternative view) always come in pairs with opposite directions. Hence, we may ignore the directions and treat the graph undirected, if at all convenient for analyzing graph connectivity. Then, the directed version of the graph is strongly connected if and only if the undirected version is connected, although the edge weights are asymmetric in the directed version. Consequently, the number of strongly connected components of the directed version is equal to the number of connected components of the undirected version. Such a view allows the application of the Perron--Frobenius theorem for exploring the extreme eigenvalues of $D^{-1}W$.

\begin{theorem}
The eigenvalues of $D^{-1}W$ equal to $+1$ are all simple. The number of such eigenvalues is equal to the number of strongly connected components of the directed version of the bipartite graph (or equivalently, the number of connected components of the undirected version of the graph).
\end{theorem}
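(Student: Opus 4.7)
The plan is to reduce the theorem to a block-by-block application of the Perron--Frobenius theorem, using the bipartite graph's connected-component structure to block-diagonalize $D^{-1}W$. First, I would recall from the previous theorem that $D^{-1}W$ is nonnegative and row-stochastic with $D^{-1}W\mathbf{1}=\mathbf{1}$. Let $k$ denote the number of connected components of the undirected version of the bipartite graph associated with $W$, and write $V' = V'_1 \cup \cdots \cup V'_k$ accordingly (disjoint union). Since $W_{ab}$ is nonzero only when $\{a,b\}$ is an edge, $W$ has no entries bridging distinct $V'_\ell$'s. A permutation of indices that groups vertices by component therefore puts $D^{-1}W$ into block-diagonal form $\mathrm{diag}(M_1,\ldots,M_k)$, and each block $M_\ell$ is itself nonnegative and row-stochastic with $M_\ell \mathbf{1}_{V'_\ell} = \mathbf{1}_{V'_\ell}$.

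Next, I would apply Perron--Frobenius to each $M_\ell$. By the remark preceding the theorem, the directed bipartite graph is strongly connected when restricted to any connected component $V'_\ell$: directed edges come in opposite pairs, so undirected connectivity upgrades to strong connectivity. Strong connectivity is equivalent to irreducibility of $M_\ell$. The Perron--Frobenius theorem for irreducible nonnegative matrices states that the spectral radius is a simple eigenvalue; combined with row-stochasticity, this yields that $+1$ is a simple eigenvalue of $M_\ell$, with one-dimensional eigenspace spanned by $\mathbf{1}_{V'_\ell}$.

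Assembling across the $k$ blocks, the eigenvalue $+1$ of $D^{-1}W$ has both algebraic and geometric multiplicity equal to $k$; each copy arises simply from a distinct irreducible diagonal block, and the $(+1)$-eigenspace is spanned by the $k$ linearly independent indicator vectors $\mathbf{1}_{V'_1},\ldots,\mathbf{1}_{V'_k}$. This is the sense in which the $+1$-eigenvalues are ``all simple.'' The main obstacle I anticipate is the clean identification of irreducibility of each $M_\ell$ with strong connectivity of the directed subgraph on $V'_\ell$, since $W$ is genuinely asymmetric in its weights even though its support is symmetric. This is precisely what the preceding paragraph of the paper sets up, so it suffices to cite that observation. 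Once in hand, the block decomposition plus Perron--Frobenius completes the argument with no further calculation.
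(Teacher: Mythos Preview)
Your proposal is correct and follows essentially the same approach as the paper: block-diagonalize $D^{-1}W$ according to the (strongly) connected components, observe that each diagonal block is irreducible and row-stochastic, and apply the Perron--Frobenius theorem block by block to conclude that $+1$ is a simple eigenvalue of each block. The paper's proof is slightly terser, but the structure and the key ingredients are identical.
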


\begin{proof}
If the directed graph is strongly connected, then $D^{-1}W$ is irreducible. Hence, by the Perron--Frobenius theorem, there exists an eigenvalue equal to the spectral radius $1$. This eigenvalue is simple and is unique. If the graph has $C$ strongly connected components, then $D^{-1}W$ may be symmetrically permuted into a $C\times C$ block-diagonal matrix, where each block corresponds to one strongly connected component. In such a case, $D^{-1}W$ has $C$ eigenvalues equal to $1$, all of which are simple.
\end{proof}

The above three theorems reveal the beautiful symmetry of the eigenvalues of $D^{-1}W$. They are real, come in pairs, and straddle around zero (except those being exactly zero). The number of nonzero pairs is equal to the rank of the incidence matrix $A$. The eigenvalues all lie inside $[-1,1]$. Moreover, there exists (at least) one pair attaining exactly $\pm1$, and the number of such pairs is equal to the number of connected components of the star expansion graph when viewed as undirected.

With this knowledge, we see that there is a one-to-one correspondence between the eigenpairs of the iteration matrix $H$ and those of $D^{-1}W$. Specifically, denote by $(\mu_i,\phi_i)$ an eigenpair of $H$. Then,
\[
H\phi_i=\mu_i\phi_i 
\quad\Leftrightarrow\quad
D^{-1}W\phi_i=\frac{\mu_i-1+\omega}{\omega}\phi_i,
\quad 0<\omega<1.
\]
Because the eigenvalues of $D^{-1}W$ are symmetric around $0$ with range $[-1,1]$, the eigenvalues $\mu_i$ of $H$ are symmetric around $1-\omega$ with range $[1-2\omega,1]$. Because $\omega$ is strictly less than $1$, the largest eigenvalue of $H$ is $1$, which is equal to the largest eigenvalue in magnitude. The multiplicity of this eigenvalue is equal to the number of connected components of the undirected version of the star expansion graph. For simplicity of analysis, we will assume from now on that the graph is connected. Then, we have an ordering of the eigenvalues according to their magnitude:
\[
1=\mu_1>|\mu_2|\ge|\mu_3|\ge\cdots\ge|\mu_{|V'|}|,
\]
with $\phi_1=\bm{1}$ being an (unnormalized) eigenvector associated with $\mu_1$. Note the use of the strictly-greater-than sign $>$ and the greater-than-or-equal-to sign $\ge$. In particular, the second eigenvalue, in magnitude, must be strictly less than $1$. In what follows, we will present a result that relates the difference between the elements of $x^{(k)}$ to that between the corresponding elements of some vector in the eigensubspace spanned by one or a few eigenvectors, including $\phi_2$. This eigensubspace depends on how many eigenvalues are equal to $|\mu_2|$ in magnitude. If only one, that is, $|\mu_2|>|\mu_3|$, then the eigensubspace is spanned by only $\phi_2$. However, if more than one, then let us assume that $|\mu_2|=|\mu_3|=\cdots=|\mu_t|>|\mu_{t+1}|$. Such a case includes two subcases:
\begin{description}
\item[case 1:] $\mu_2=\mu_3=\cdots=\mu_t$; and
\item[case 2:] $\mu_2$, $\mu_3$, \ldots, $\mu_t$ are not all equal.
\end{description}
In each subcase, the vector is some linear combination of $\phi_2$, \ldots, $\phi_t$.

\begin{theorem}\label{thm:convergence2}
Assume that the undirected version of the star expansion graph is connected. Let the initial iterate $x^{(0)}$ be expanded in the eigen-basis of $H$ as
\[
x^{(0)}=a_1\phi_1+a_2\phi_2+\cdots+a_{|V'|}\phi_{|V'|}.
\]
\begin{enumerate}[label=(\roman*),leftmargin=*]
\item If $\mu_2=\mu_3=\cdots =\mu_t$ and $|\mu_t|>|\mu_{t+1}|$ for some $t\geq 2$, and if $a_2$, \ldots, $a_t$ are not all zero, then for any pair $i,j$,
\[
\lim_{k\to\infty}\frac{(x^{(k)})_i-(x^{(k)})_j}{\alpha^{(0)}\alpha^{(1)}\cdots\alpha^{(k-1)}\mu_2^k}=
\xi_i-\xi_j,
\]
where
\[
\xi=a_2\phi_2+\cdots+a_t\phi_t.
\]
\item If $|\mu_2|=|\mu_3|=\cdots =|\mu_t|>|\mu_{t+1}|$ for some $t\geq 3$ where $\mu_2$, $\mu_3$, \ldots, $\mu_{t}$ are not all equal, and if $a_2$, \ldots, $a_t$ are not all zero, then for any pair $i,j$,
\[
\lim_{k\to\infty}\frac{(x^{(2k+p)})_i-(x^{(2k+p)})_j}{\alpha^{(0)}\alpha^{(1)}\cdots\alpha^{(2k+p-1)}\mu_2^{2k+p}}=(\eta_p)_i-(\eta_p)_j,
\]
where
\[
\eta_p = a_2\phi_2 + a_3(\mu_3/\mu_2)^p\phi_3 + \cdots + a_t(\mu_t/\mu_2)^p\phi_t,
\quad p=0,1.
\]
\end{enumerate}
\end{theorem}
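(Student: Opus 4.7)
My plan is to diagonalize the iteration in the eigenbasis of $H$, track the expansion coefficients through a simple scalar recurrence, and then exploit that $\phi_1 = \bm{1}$ contributes nothing to componentwise differences. Concretely, write $x^{(k)} = \sum_{s=1}^{|V'|} b_s^{(k)} \phi_s$ with $b_s^{(0)} = a_s$. Substituting into the update $x^{(k)} = \alpha^{(k-1)} H x^{(k-1)} + \beta^{(k-1)}\bm{1}$ and using $H\phi_s = \mu_s\phi_s$ together with $\bm{1} = \phi_1$, I obtain $b_1^{(k)} = \alpha^{(k-1)} b_1^{(k-1)} + \beta^{(k-1)}$ and, crucially,
\[
b_s^{(k)} = \alpha^{(k-1)} \mu_s\, b_s^{(k-1)}, \qquad s \geq 2.
\]
The additive $\beta^{(k-1)}\bm{1}$ shift only perturbs the $\phi_1$ coefficient. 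Iterating gives the closed form $b_s^{(k)} = a_s\, \mu_s^k \prod_{\ell=0}^{k-1}\alpha^{(\ell)}$ for every $s \geq 2$.

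Next, since $\phi_1 = \bm{1}$ has identical entries, $(\phi_1)_i - (\phi_1)_j = 0$, so the messy coefficient $b_1^{(k)}$ disappears from any difference:
\[
\frac{(x^{(k)})_i-(x^{(k)})_j}{\prod_{\ell=0}^{k-1}\alpha^{(\ell)}\,\mu_2^k}
= \sum_{s=2}^{|V'|} a_s\left(\frac{\mu_s}{\mu_2}\right)^{\!k}\!\bigl[(\phi_s)_i-(\phi_s)_j\bigr].
\]
For $s > t$, $|\mu_s/\mu_2| < 1$, so those terms decay to zero, and only the indices $2 \leq s \leq t$ survive in the limit.

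In case (i), the hypothesis $\mu_2 = \cdots = \mu_t$ makes $(\mu_s/\mu_2)^k \equiv 1$, so the limit equals $\sum_{s=2}^t a_s[(\phi_s)_i - (\phi_s)_j] = \xi_i - \xi_j$. In case (ii), I will invoke Theorem \ref{thm:dw_diagonalizability} to note that the eigenvalues of $D^{-1}W$, and hence of $H = \omega D^{-1}W + (1-\omega)I$, are real. Combined with $|\mu_s| = |\mu_2|$ for $s \leq t$, this forces $\mu_s/\mu_2 \in \{+1,-1\}$. Therefore $(\mu_s/\mu_2)^{2k+p} = (\mu_s/\mu_2)^p$ for $p \in \{0,1\}$, which is constant in $k$ and equals precisely the coefficient of $\phi_s$ in $\eta_p$. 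Passing to the limit along the even ($p=0$) and odd ($p=1$) subsequences then yields $(\eta_p)_i - (\eta_p)_j$, as claimed. The nonvanishing hypothesis on $a_2,\ldots,a_t$ guarantees the denominator $\prod\alpha^{(\ell)}\mu_2^k$ captures the true leading order rather than something finer.

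The only delicate point is the legitimacy of the eigenbasis expansion itself, since $H$ is nonsymmetric. That is exactly what Theorem \ref{thm:dw_diagonalizability} provides: $D^{-1}W$ is diagonalizable via the singular value decomposition of $(D^h)^{-1/2}(S^h)^{1/2}A(D^v)^{-1/2}(S^v)^{1/2}$, so $H$ inherits a full set of eigenvectors. After that, the remaining work is purely algebraic: the random, possibly oscillating normalization $\prod\alpha^{(\ell)}$ is benign because the same quantity appears on both sides of the identity above, and the even/odd split in case (ii) is forced by the reality of the eigenvalues rather than by any analytic subtlety.
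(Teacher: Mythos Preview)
Your proof is correct and follows essentially the same route as the paper's. The only cosmetic difference is that the paper unrolls the recurrence~\eqref{eq:xk} $k$ times to write $x^{(k)}=\alpha^{(0)}\cdots\alpha^{(k-1)}H^kx^{(0)}+\gamma^{(k)}\bm{1}$ and then applies the eigendecomposition to $H^kx^{(0)}$, whereas you track the eigenbasis coefficients $b_s^{(k)}$ through the recurrence directly; both arrive at the identical expression $\sum_{s\ge2}a_s(\mu_s/\mu_2)^k[(\phi_s)_i-(\phi_s)_j]$ and finish with the same limit argument and the same appeal to reality of the eigenvalues for case~(ii).
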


\begin{proof}
For notational convenience, let $n=|V'|$. Write the diagonalization of $H$ as $H=\Phi\Lambda\Phi^{-1}$, where $\Phi=[\phi_1,\ldots,\phi_n]$ and $\Lambda=\diag(\mu_1,\ldots,\mu_n)$.  Also write $x^{(0)}=\Phi a$, where $a=[a_1,\ldots,a_n]^T$. Then, expanding~\eqref{eq:xk} $k$ times, we have
\begin{equation}
\begin{split}
x^{(k)}&=\alpha^{(0)}\ldots\alpha^{(k-1)}H^kx^{(0)}
+\\
&+(\beta^{(0)}\alpha^{(1)}\ldots\alpha^{(k-1)}H^{k-1}
+\beta^{(1)}\alpha^{(2)}\ldots\alpha^{(k-1)}H^{k-2}
+\cdots
+\beta^{(k-1)}H^0)\bm{1}.
\end{split}
\end{equation}

Because $\bm{1}$ is an eigenvector of $H$ corresponding to eigenvalue $1$, the term in the second line is equal to $\gamma^{(k)}\bm{1}$ with $\gamma^{(k)}=\beta^{(0)}\alpha^{(1)}\ldots\alpha^{(k-1)}
+\beta^{(1)}\alpha^{(2)}\ldots\alpha^{(k-1)}
+\cdots
+\beta^{(k-1)}$.
For the first term on the right of the first line,
\[
H^kx^{(0)}=\Phi\Lambda^ka
=a_1\phi_1+a_2\mu_2^k\phi_2+\cdots+a_n\mu_n^k\phi_n,
\]
with $\phi_1=\bm{1}$. Therefore,
\[
x^{(k)}
=\alpha^{(0)}\ldots\alpha^{(k-1)}(a_1\bm{1}+a_2\mu_2^k\phi_2+\cdots+a_n\mu_n^k\phi_n)
+\gamma^{(k)}\bm{1},
\]
and hence
\[
\frac{(x^{(k)})_i-(x^{(k)})_j}{\alpha^{(0)}\ldots\alpha^{(k-1)}\mu_2^k}=(e_i-e_j)^T[a_2\phi_2+a_3(\mu_3/\mu_2)^k\phi_3+\cdots+a_n(\mu_n/\mu_2)^k\phi_n].
\]
\begin{enumerate}[label=(\roman*),leftmargin=*]
\item When $k$ is large, the term inside the square bracket is dominated by 
\[
a_2\phi_2+\cdots+a_t\phi_t,
\]
because all other $\mu_i$'s are smaller than $\mu_2$ in magnitude. Thus, when $k\to\infty$, only this term remains, hence the result.

\item Similar to the above case,
\begin{multline*}
\frac{(x^{(2k+p)})_i-(x^{(2k+p)})_j}{\alpha^{(0)}\ldots\alpha^{(2k+p-1)}\mu_2^{2k+p}}
=(e_i-e_j)^T[a_2\phi_2+a_3(\mu_3/\mu_2)^{2k+p}\phi_3+\cdots\\
\cdots+a_n(\mu_n/\mu_2)^{2k+p}\phi_n].
\end{multline*}

When $k\to\infty$, in the square bracket only the term
\[
a_2\phi_2+a_3(\mu_3/\mu_2)^{2k+p}\phi_3+\cdots+a_t(\mu_t/\mu_2)^{2k+p}\phi_t
\]
remains and the rest vanishes. Because the eigenvalues $\mu_2$, \ldots, $\mu_t$ are all real, the ratios $\mu_3/\mu_2$, \ldots $\mu_t/\mu_2$ can only be $\pm1$. Hence, taking square, the ratios all become $1$. We thus obtain the result in the theorem.
\end{enumerate}
\end{proof}

Informally speaking, the above theorem states that in the limit, the difference between two elements of the iterate vector $x^{(k)}$ is proportional to that between the corresponding elements of some vector $\xi$. When $|\mu_2|$ is strictly greater than $|\mu_3|$, this vector $\xi$ may be taken to be the eigenvector $\phi_2$. When there exist more than one eigenvalue equal to $\mu_2$, say, $\mu_2=\cdots=\mu_t$ for some $t>2$, then $\xi$ is a linear combination of $\phi_2$, \ldots, $\phi_t$, where the coefficients of the combination are the expansion coefficients of the initial iterate $x^{(0)}$ along the eigen-basis of the iteration matrix $H$. However, when there exist more than one eigenvalue whose magnitude is equal to $|\mu_2|$ but these eigenvalues are not all the same, then the situation is slightly complicated. Every other iterate $x^{(k)}$ form a subsequence and the limiting behaviors of these two interleaving subsequences correspond to two different vectors, which are $\eta_0$ and $\eta_1$ defined in the theorem. Both of them are a linear combination of the eigenvectors $\phi_2, \ldots, \phi_t$. For~$\eta_0$, the coefficients of the combination are the expansion coefficients of $x^{(0)}$ along the eigen-basis of $H$; for $\eta_1$, some of these coefficients flip signs.

The existence of these various cases is owing to the different choices of $\omega$. Recall that the eigenvalues of $D^{-1}W$ are symmetrically distributed around zero. Then, the eigenvalues of $H=\omega D^{-1}W+(1-\omega)I$ are symmetrically distributed around $1-\omega$, with the smallest one being $1-2\omega$. We will use $\sigma_2>0$ to denote the second largest eigenvalue of $D^{-1}W$, which is consistent with the notation in Theorem~\ref{thm:dw_diagonalizability}. By the connected-graph assumption, $\sigma_2<\sigma_1=1$. We also have that the multiplicity of $1-2\omega$ is one, by the same assumption. Furthermore, the second largest eigenvalue of $H$ is $\omega\sigma_2+1-\omega$.
\begin{enumerate}[label=(\alph*),leftmargin=*]
\item If $\omega>\frac{2}{3-\sigma_2}$, then $1-2\omega$ is negative and it has a larger magnitude than does $\omega\sigma_2+1-\omega$ . In such a case, $\mu_2=1-2\omega$ and $|\mu_2|$ is strictly greater than $|\mu_3|$. This scenario corresponds to the case (i) of Theorem~\ref{thm:convergence2} with $t=2$.

\item If $\omega\le\frac{1}{2}$, then $1-2\omega$ is nonnegative, and hence all eigenvalues of $H$ are nonnegative. In such a case, $\mu_2=\omega\sigma_2+1-\omega$. This scenario also corresponds to the case (i) of Theorem~\ref{thm:convergence2}, because if there are more than one eigenvalue whose magnitude is equal to $\mu_2$, then the nonnegativity implies that these eigenvalues must be the same as $\mu_2$. The multiplicity of $\mu_2$ is equal to the multiplicity of $\sigma_2$. It could happen that either the multiplicity of $\sigma_2$ is one, in which case $t=2$; or the multiplicity of $\sigma_2$ is greater than one, in which case $t>2$.

\item If $\omega=\frac{2}{3-\sigma_2}$, then $1-2\omega$ is negative, $\omega\sigma_2+1-\omega$ is positive, and the two have the same magnitude. This scenario corresponds to the case (ii) of \cref{thm:convergence2}. Whether $t=3$ or $t>3$ depends on the multiplicity of $\omega\sigma_2+1-\omega$ (equivalently that of $\sigma_2$), because the multiplicity of $1-2\omega$ is one. If the multiplicity of $\sigma_2$ is one, then only two eigenvalues have magnitude equal to $|\mu_2|$, and hence $t=3$; otherwise, $t>3$.

\item If $\frac{1}{2}<\omega<\frac{2}{3-\sigma_2}$, then $1-2\omega$ is negative and $\omega\sigma_2+1-\omega$ is greater than the magnitude of $1-2\omega$. This scenario is similar to that of the above item (b), except that not all the eigenvalues of $H$ are nonnegative. All other properties are otherwise the same.
\end{enumerate}

\subsection{Mutually influenced model}
\label{sec:alg_dist_intuition}

In the standard graph case, the edge weights provide a first-order measurement of vertex\linebreak similarity/distance. This measure is applicable to only adjacent vertices. Hence, for a pair of vertices that are not adjacent, global information of the graph must be incorporated for extending the measurement. The \emph{algebraic distance on graphs}~\cite{chen2011algebraic} defines algebraic coordinates for every vertex through an iterative process similar to that in the present work. In the limit, the coordinate difference, which serves the notion of ``distance,'' is proportional to the difference of the corresponding elements of some vector $y$. Specifically, if we let $w_{ij}$ be the weights of a pair of vertices $ij$ in the graph, then the elements of the vector $y$ satisfy an equilibrium state
\begin{equation}\label{eqn:model.graph}
y_i=\gamma y_i+\sum_{j}\frac{w_{ij}}{d_i}y_j,
\end{equation}
where $d_i=\sum_{j}w_{ij}$ is used for normalization and $0<\gamma<1$ is a constant factor for all $i$. The algebraic coordinates do not need to coincide with the $y_i$'s; it suffices for their differences to be proportional. When one treats the vertices to be entities of a mutually influenced environment, then the value $y_i$ of each entity is composed of two components according to~\eqref{eqn:model.graph}: a portion of itself ($\gamma y_i$) and a normalized weighted contributions of its neighbors ($\sum_{j}\frac{w_{ij}}{d_i}y_j$). If $y_i$ denotes the amount of information stored at vertex $i$, \cref{eqn:model.graph} essentially signifies a global equilibrium of the flow of information. Such a view is used to interpret the global similarity of vertices through neighborhoods. In short, two vertices are similar if their neighborhoods are similar, because the common factor $\gamma$ is constant and similarity relies on the neighboring weights $w_{ij}$.

The present work extends this notion to hypergraphs. In such a setting, the hyperedges cover not just a pair $ij$ of vertices, but rather, a subset of any size (excluding of course empty sets and singletons). Hence, a proxy of the pairwise environment is the star expansion graph $G$, whose vertex set $V'$ includes not only the original vertices $V$ of the hypergraph, but also the hyperedges $E$. Naturally, the edges of $G$ come from the containment relation between $V$ and $E$; that is, $v\in V$ and $h\in E$ are connected if and only if $v\in h$. Therefore, the edge weights of $G$ come from the weights in the original hypergraph, properly scaled. It is important to note that the scaling is not symmetric, because potentially one hyperedge may contain a large number of vertices. Thus, whereas the weights for the directed edge from $v$ to $h$ are $w(v)$ without scaling, those for the edges from $h$ to $v$ are $w(h)$ scaled by $|h|$.

Then, the \emph{algebraic distance on hypergraphs} enjoys the same equilibrium\linebreak model~\cref{eqn:model.graph}. From \cref{thm:convergence2} and the subsequent discussions, we know that $\gamma=(1-\mu_2)/\omega$, where $\mu_2$ is the second largest eigenvalue in magnitude of the iteration matrix $H$, and $y$ is a vector in the subspace spanned by the eigenvectors associated with eigenvalues whose magnitude are equal to $|\mu_2|$. In practice, $\omega$ is often set to be $1/2$, then $\mu_2$ is positive\footnote{We assume the nondegenerate case where $\mu_2$ is nonzero. Otherwise, all eigenvalues except for the largest one is zero.} and all other eigenvalues having the same magnitude must be equal to $\mu_2$. If furthermore the multiplicity of $\mu_2$ is $1$, then the subspace is spanned by the eigenvector $\phi_2$ only and hence $y$ can be taken to be $\phi_2$. Note that the multiplicity of eigenvalue $\mu_2$ is the same as the multiplicity of the singular value $\sigma_2$ of the matrix $(D^{h})^{-1/2}(S^{h})^{1/2}A(D^{v})^{-1/2}(S^{v})^{1/2}$ in \cref{thm:dw_diagonalizability}.

Finally, when extending the notion of the \emph{algebraic distance} to hypergraphs, we have to take into account the non-pairwise nature of the relations between the vertices. While in pairwise graph setting it is natural to use the simple difference between two vertices' algebraic coordinates as the measure of similarity between them, in non-pairwise hypergraph setting this approach has to be extended. In the hypergraph case we instead assume that the hyperedge is only as important as two most dissimilar vertices in it. In hypergraph partitioning terms, this means that we want to avoid cutting the hyperedges that contain only similar vertices. Therefore we define the algebraic weight of the hyperedge as:
\[
alg\_weight(h) = \frac{1}{\max\{|x[v]-x[u]| \mid u,v\in h\}}
\]
This way, we penalize cutting the hyperedges that contain only similar vertices, with the notion of similarity defined according to the mutually influenced model.

\section{Experimental Results}

In this section, we first illustrate the empirical convergence of~\cref{alg:weights} and then compare the quality of the partitioning produced by our algorithm with those by other state-of-the-art hypergraph partitioners. Our major goal is to study the effectiveness of the algebraic distance on hypergraph partitioning by introducing it into a coarsening scheme. We note that these results can clearly be further improved by using more advanced refinement techniques which are beyond the scope of this work.

\subsection{Convergence}

The speed of the convergence of algebraic weights depends on the gap between the second largest eigenvalue in magnitude, $|\mu_2|$, of the iteration matrix $H$ and the next eigenvalue with a different magnitude, denoted as $|\mu_{t+1}|$ in the preceding section. Estimating this gap is no less expensive than computing the corresponding eigenvectors. Hence, in practice, we use the squared sine of the angle between two iterates $x^{(k)}$ and $x^{(k+1)}$:
\[
 1- \left\langle \frac{x^{(k)}}{||x^{(k)}||},\frac{x^{(k+1)}}{||x^{(k+1)}||} \right\rangle^2,
\]
to measure how parallel the two iterates are, as a proxy of convergence test.

The two consecutive iterates generally become parallel very quickly. In~\cref{fig:convergence} we pick five hypergraphs 
and plot the squared sine for the first 50 iterations. These hypergraphs represent different sizes (from $2426-by-3602$ to $103631-by-395979$) and different origins (from social networks to circuit simulation) of hypergraphs in the benchmark. One sees that the value is indistinguishable from zero after $10$ to $20$ iterations. Such a phenomenon is typical to our experience and we generally set $num\_iter$ comparable to these numbers.

\begin{figure}
  \includegraphics[width=\textwidth]{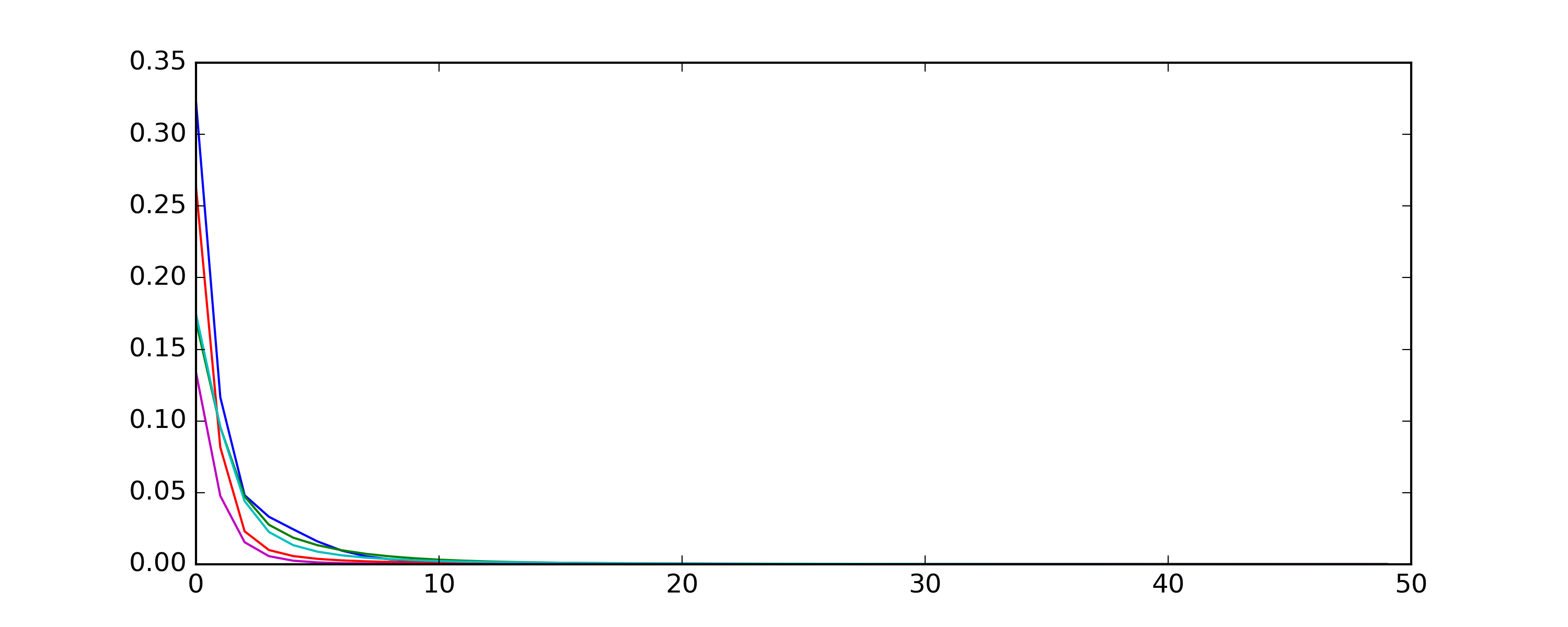}
  \caption{Squared sine of the angle between $x^{(k)}$ and $x^{(k+1)}$ as a function of the iteration number $k$.}\label{fig:convergence}
\end{figure}

It is worth noting that the parallelism of two consecutive iterates does not necessarily mean true convergence. As noted by Chen and Safro~\cite{chen2011algebraic}, the eigenvalue gap may be so small that it might take a huge number of iterations for the difference 
\[
(x^{(k)})_i-(x^{(k)})_j
\]
(after scaling) to get close enough to $\xi_i-\xi_j$. The parallelism before convergence corresponds to a transient state wherein the change of the iterates is small. It turns out that in practice, a transient state is sufficient for the algebraic distance to be useful for coarsening.

\subsection{Cut improvements}

We implement our algorithm with algebraic distances by using the Zoltan~\cite{devine2006parallel} package of the Trilinos Project~\cite{heroux2005overview}. Zoltan is an open-source toolkit of parallel combinatorial scientific computing algorithms~\cite{devine2006parallel} designed to optimize load balancing. It includes a hypergraph partitioning algorithm called PHG (Parallel HyperGraph partitioner). We augment Zoltan's PHG partitioner with algebraic distances as described in the preceding section. Our algorithm is called \algname. All comparisons with Zoltan are indeed comparisons with Zoltan's PHG partitioner. In this study we use Zoltan in the serial mode. In addition to comparing \algname to Zoltan, we compare it with two other state-of-the-art partitioners, namely hMetis2~\cite{karypis1999multilevel} and PaToH v3.2~\cite{ccatalyurek2011patoh}.

PaToH is used as a plug-in for Zoltan, as described in Zoltan's User Guide~\cite{zoltanuserguide}. We run PaToH with two different parameters: default (here denoted as PaToH-D) and quality (PaToH-Q). hMetis2 is used to directly optimize $k$-way partitioning. All parameters are set as default: greedy first-choice scheme for coarsening, random $k$-way refinement, and min-cut objective function.

Each algorithm is run 10 times and the smallest cuts over all runs are compared. In the experiments, standard deviation of the cuts is usually small ($<5\%$). Interestingly, in a small number of hypergraphs, we observe that the distribution of cuts is bimodal (i.e., the partitioner produces cuts close to either one of two modes). In this case, the standard deviation is high; however, within each mode, the deviation is low. Such behavior demonstrates that in certain settings the solvers cannot escape false local attraction basins obtained at the coarse levels. This hints that the current state of hypergraph partitioning solvers is still far from being optimal and there is a lot of space for improvement. 

Because the implementation of our algorithm has not been optimized, the run time of \algname is on average two to four times longer than that of Zoltan (see~\cref{fig:time}). However, since the iteration scheme is easily parallelizable (for example, among different random vectors, the iteration is pleasingly parallel as well as parallelization of Jacobi-based relaxations has been studied and used in many works), the overhead of computing the algebraic distances may be minimized and thus the run time may be made similar to that of Zoltan. Moreover, the algebraic distance is, in fact, several iterations of Jacobi over-relaxation whose parallelization has been studied a lot.

We also have to point out the lack of progress in coarsening techniques in the recent years, with all major hypergraph partitioning packages using the same approach, making our contribution more valuable. While the way we select vertices to be merged together during coarsening is indeed more expensive than traditional approaches, it introduces very low overhead (\textit{constant} number of passes over vertices) and shows great improvement in the area where state-of-the-art has not changed for a long time.

The different algorithms are compared on three groups of hypergraphs: big-bench, SNAP-bench, and social-networks-bench. 

The first two groups (big-bench and SNAP-bench) are generated from matrices obtained from the University of Florida Sparse Matrix Collection~\cite{davis2011university} by using a row-net model: a vertex $i$ belongs to hyperedge $j$ if there is a non-zero element on the intersection of $i$th column and $j$th row. Big-bench contains 443 matrices; many of them are derived from  optimization problems. SNAP-bench consists of 42 matrices from the SNAP (Stanford Network Analysis Platform) Network Data Sets~\cite{snapnets}.

The third group of the benchmark, social-networks-bench (12 hypergraphs), has two parts. The first part contains two networks with known communities (youtube and flickr) from the IMC 2007 Data Set~\cite{mislove-2007-socialnetworks}. The hypergraphs are constructed in the following way: each vertex represents a user and each hyperedge represents a community. After generating the hypergraph, isolated vertices are removed. 
The second part contains ``similar hypergraphs.'' They are generated by using the following pipeline: a graph of pairwise links is obtained for the same dataset and then a similar graph is generated by using the BarabasiAlbertGenerator~\cite{barnard1994fast,batagelj2005efficient} of NetworKit~\cite{staudt2014networkit}. Afterwards, the adjacency matrix of the new graph is interpreted by using the row-net model. 

In \cref{fig:big-bench-1,fig:big-bench-2,fig:soc-1,fig:soc-2} the results are presented graphically.
Each curve plots the ratio 
\[
\frac{\text{cut obtained using another algorithm}}{\text{cut obtained using \algname}}, 
\]
where for the \zoltancolor curve, the ``other algorithm'' is Zoltan; \patohqcolor curve, PaToH-Q; \patohdcolor curve, PaToH-D; and \hmetiscolor curve, hMetis2. Each plot corresponds to a certain number of parts and a certain imbalance factor. The hypergraphs are ordered in the the increasing ratio.

For readability, the results for big-bench are split in two parts: those with ratios
$$\frac{\text{cut obtained using another algorithm}}{\text{cut obtained using \algname}}<1.5$$
and those with ratios greater than $1.5$ (i.e., with more than 50\% improvement in cut). The results for the social-networks-bench and SNAP-bench are plotted on the same figure, since the social-networks-bench is considerably smaller than the other two. The results show substantial improvement over Zoltan without algebraic distance, as well as over hMetis2 and PaToH on most of the hypergraphs. For the full set of results, please refer to \url{http://shaydul.in/hypergraph-partitioning-archive/}.

\begin{figure} 
\captionsetup{justification=centering}
\centering
  \includegraphics[width=0.85\textwidth]{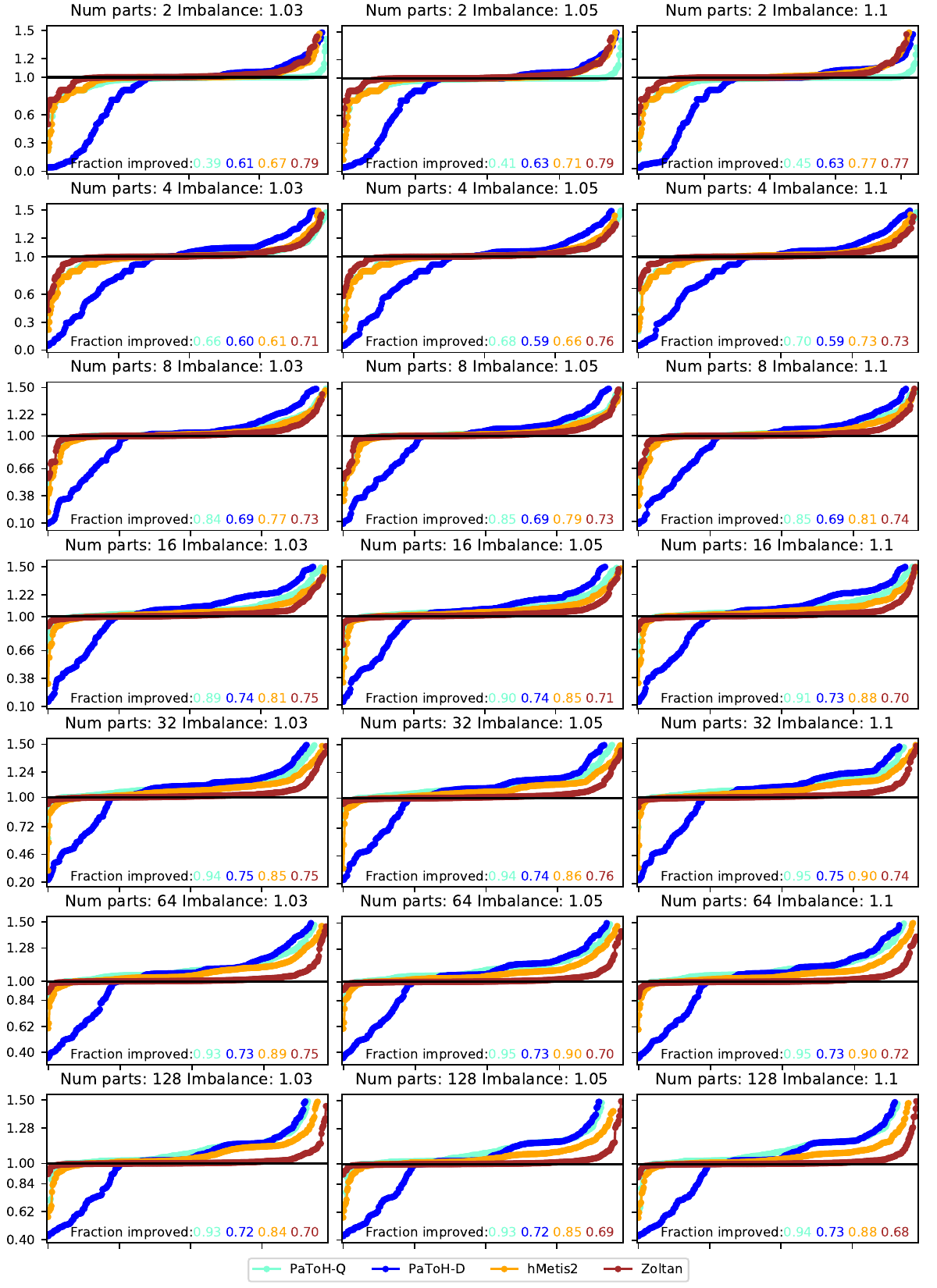}
  \caption{First half ($\frac{\text{cut obtained using another algorithm}}{\text{cut obtained using \algname}}<1.5$) of big-bench (443 hypergraphs), with \zoltancolor curve corresponding to Zoltan, \patohqcolor to PaToH-Q, \patohdcolor to PaToH-D, and \hmetiscolor to hMetis2. The improvements are with respect to the whole big-bench, including those with improvement greater than $1.5$.}
  \label{fig:big-bench-1}
\end{figure}

\begin{figure} 
\captionsetup{justification=centering}
\centering
  \includegraphics[width=0.85\textwidth]{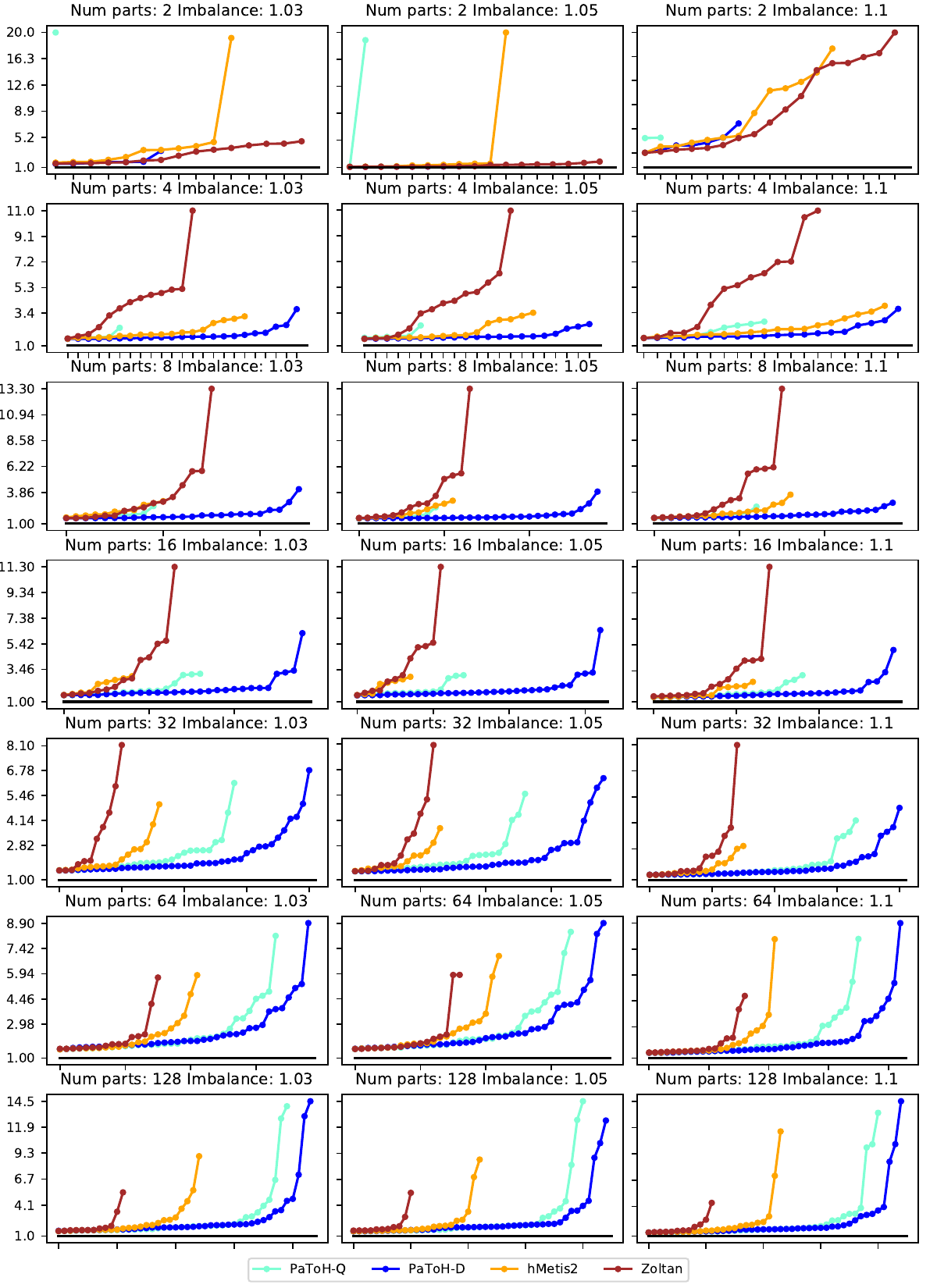}
  \caption{Second half ($\frac{\text{cut obtained using another algorithm}}{\text{cut obtained using \algname}}\geq 1.5$) of big-bench (443 hypergraphs), with \zoltancolor curve corresponding to Zoltan, \patohqcolor to PaToH-Q, \patohdcolor to PaToH-D, and \hmetiscolor to hMetis2 (greater is better).}
  \label{fig:big-bench-2}
\end{figure}

\begin{figure}  
\captionsetup{justification=centering}
\centering
  \includegraphics[width=0.85\textwidth]{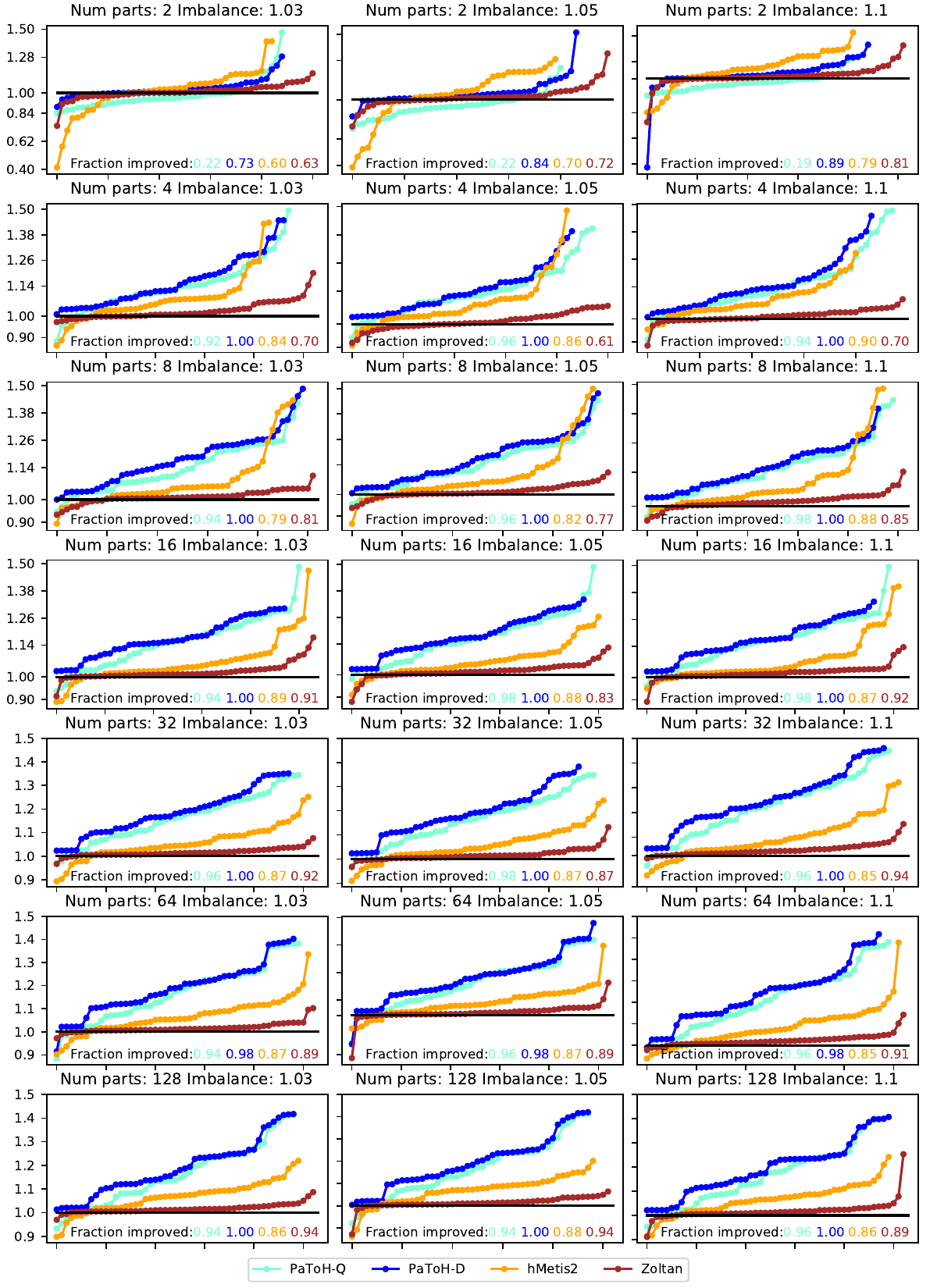}
  \caption{First half ($\frac{\text{cut obtained using another algorithm}}{\text{cut obtained using \algname}}<1.5$) of SNAP-bench and social-networks-bench (54 hypergraphs), with \zoltancolor curve corresponding to Zoltan, \patohqcolor to PaToH-Q, \patohdcolor to PaToH-D, and \hmetiscolor to hMetis2. The improvements are with respect to the whole big-bench, including those with improvement greater than $1.5$}
  \label{fig:soc-1}
\end{figure}

\begin{figure} 
\captionsetup{justification=centering}
\centering
  \includegraphics[width=0.85\textwidth]{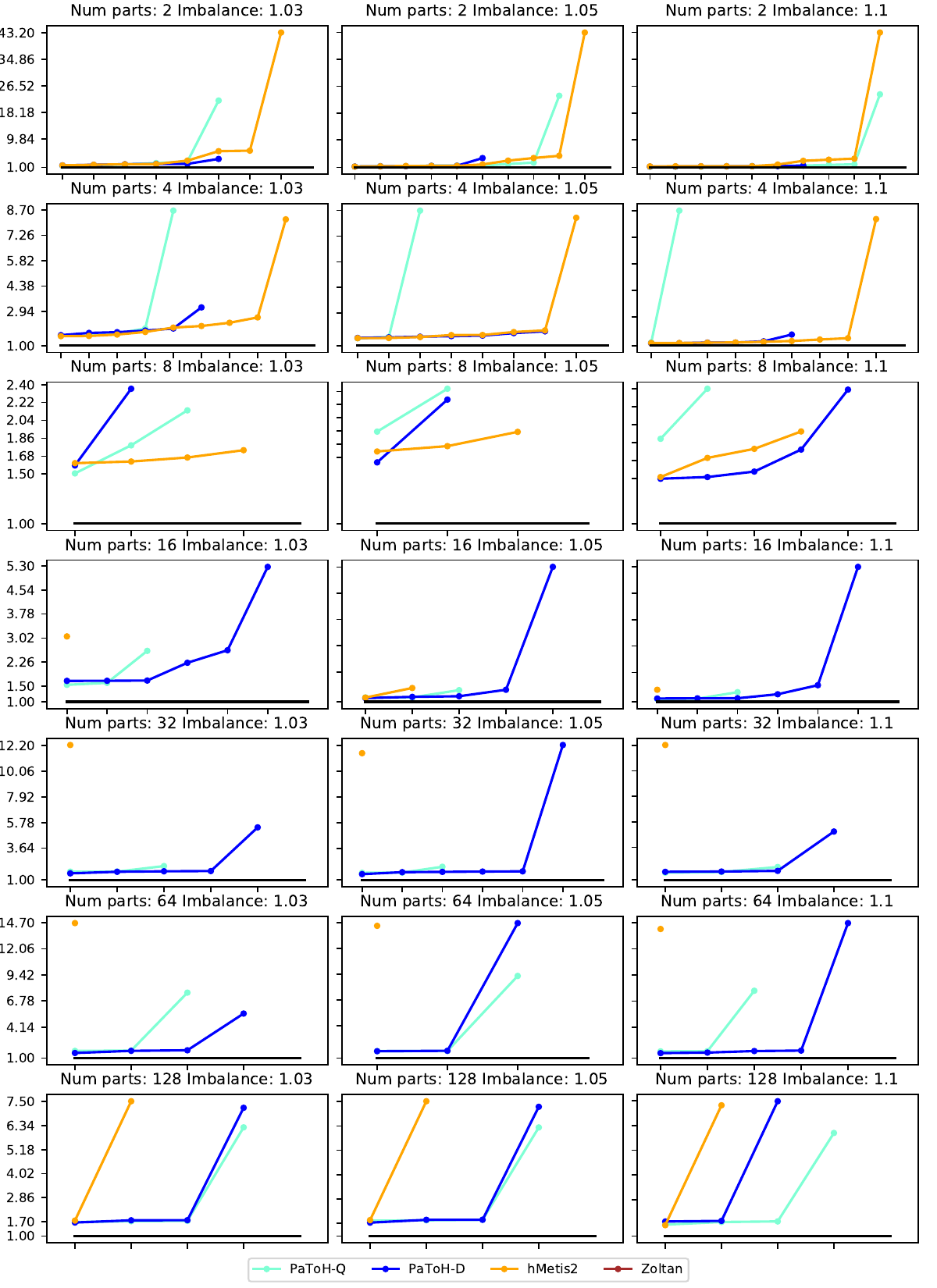}
  \caption{Second half ($\frac{\text{cut obtained using another algorithm}}{\text{cut obtained using \algname}}\geq 1.5$) of SNAP-bench and social-networks-bench (54 hypergraphs), with \zoltancolor curve corresponding to Zoltan, \patohqcolor to PaToH-Q, \patohdcolor to PaToH-D, and \hmetiscolor to hMetis2 (greater is better).}
  \label{fig:soc-2}
\end{figure}

\begin{figure} 
\captionsetup{justification=centering}
\centering
  \includegraphics[width=0.85\textwidth]{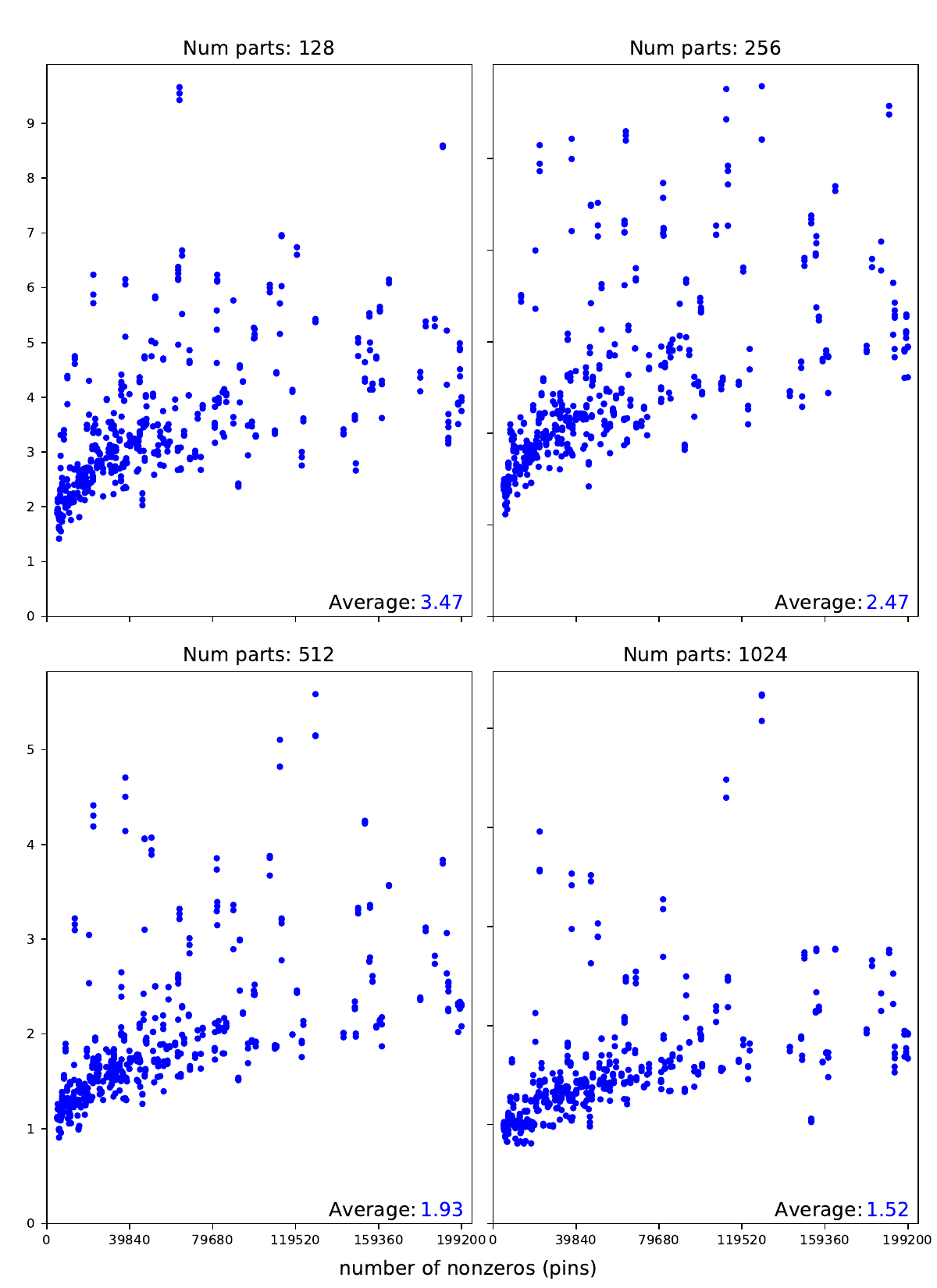}
  \caption{Ratio $\frac{\text{\algname runtime}}{\text{Zoltan runtime}}$ for the big-bench benchmark. Time is CPU time as reported by $zdrive$.}
  \label{fig:time}
\end{figure}

\section{Conclusions}

In this paper we have introduced a new similarity measure for hypergraph vertices---algebraic distances. This similarity measure is used for more accurate vertex aggregation during the coarsening stage of a multilevel algorithm. A~serial iterative procedure for computing algebraic distances is introduced and implemented within the multilevel hypergraph partitioning framework Zoltan. The procedure results in a significant improvement (average of $34.3\%$) over the same framework without algebraic distances, while decreasing the cut by more than two times for some hypergraphs. The algorithm is shown to outperform other state-of-the-art partitioners as well.

The experimental results indicate that one may gain substantial performance improvements through exploiting the global structure of highly irregular hypergraphs (e.g. social networks and other hypergraphs with power-law degree distribution). Exploiting the spectral properties of the hypergraph and its star expansion through some iterative procedure, like the one proposed in the work, is one way to achieve the gain. 
There remains ample room for improvement for current state-of-the-art hypergraph partitioners, particularly for the coarsening stage.

\section*{Acknowledgments}
This material is based upon work supported by the National Science Foundation under grant no. 1522751. The authors would like to thank Clemson University for generous allotment of computation time on Palmetto cluster and Palmetto support staff for technical assistance.

\bibliographystyle{siam}
\bibliography{main}
\end{document}